\newtheorem{theorem}{Theorem}[section]
\newtheorem{lemma}[theorem]{Lemma}
\newtheorem{satz}[theorem]{Proposition}
\newtheorem{defn}[theorem]{Definition}
\newcommand{\sq}{\hbox{\rlap{$\sqcap$}$\sqcup$}}
\newcommand{\qed}{\hspace*{\fill}\sq}
\newenvironment{proof}{\noindent {\bf Proof.}\ }{\qed\par\vskip 4mm\par}
\newenvironment{proofof}[1]{\bigskip \noindent {\bf Proof of #1:}\quad }
{\qed\par\vskip 4mm\par}
\newcommand{\bO}{\ensuremath{\mathcal{O}}}
\newcommand{\poly}{\ensuremath{\mathtt{poly}}}
\newcommand{\setN}{\ensuremath{\mathbb{N}}}
\newcommand{\setGW}{\ensuremath{\mathbb{GW}}}
\newcommand{\setG}{\ensuremath{\mathbb{G}}}
\newcommand{\setT}{\ensuremath{\mathbb{T}}}
\newcommand{\NH}{\ensuremath{\mathtt{N}}}
\newcommand{\Id}{\ensuremath{\mathtt{Id}}}
\newcommand{\Inp}{\ensuremath{\mathtt{Inp}}}
\newcommand{\LD}{\ensuremath{\mathcal{LD}}}
\newcommand{\NLD}{\ensuremath{\mathcal{NLD}}}
\newcommand{\NLDf}[1]{\NLD(\bO(#1))}
\newcommand{\NLDn}{\ensuremath{\mathcal{NLD}^{\#n}}}
\newcommand{\LOCAL}{\ensuremath{\mathcal{LOCAL}}}
\newcommand{\dist}{\ensuremath{\mathtt{dist}}}
\newcommand{\V}{\ensuremath{\mathcal{V}}}
\newcommand{\w}{\ensuremath{\mathtt{w}}}
\newcommand{\langSet}[2]{ \ensuremath{ \left\{ #1 \left| \,\begin{matrix} #2\end{matrix}\right.\right\}  }}
\newcommand{\algorithm}[1]{#1}
\newcommand{\lang}[1]{\ensuremath{\mathtt{#1}}}
\newcommand{\coMaximumMatching}{\overline{\lang{MaximumMatching} }}
\newcommand{\HasPerfectMatching}{\lang{HasPerfectMatching}}
\newcommand{\kDominatingSetNotFixed}{\lang{k-DominatingSet}^{NotFixed}}
\newcommand{\DomaticNumber}{\lang{DomaticNumber}}
\newcommand{\kColorable}{\lang{k-Colorable}}
\newcommand{\kEdgeColorable}{\lang{k-EdgeColorable}}
\newcommand{\Max}{\lang{Max}}
\newcommand{\Min}{\lang{Min}}
\newcommand{\Avg}{\lang{Avg}}
\newcommand{\LogicalOr}{\lang{LogicalOr}}
\newcommand{\LogicalAnd}{\lang{LogicalAnd}}
\newcommand{\Mode}{\lang{Mode}}
\newcommand{\Tree}{\lang{Tree}}
\newcommand{\AvgDeg}{\lang{AvgDeg}}
\newcommand{\MostDeg}{\lang{MostDeg}}
\newcommand{\Clique}{\lang{Clique}}
\newcommand{\Bipartite}{\lang{Bipartite}}
\newcommand{\kConsensus}{\lang{k-Set Consensus}}
\newcommand{\kSmallest}{\lang{k-SmallestElements}}
\newcommand{\FixedPointFreeSymmetryOnTrees}{\lang{FPFSymmetryOnTrees}}
\newcommand{\EqualSizePartition}{\lang{EqSizePartition}}
\newcommand{\NonEmptySetIntersection}{\lang{NonEmptySetIntersection}}
\begin{document}

\begin{titlepage}

\title{Hierarchies in Local Distributed Decision\footnote{This work was partially supported by the German Research Foundation (DFG) within the Collaborative Research Centre ``On-The-Fly Computing'' (SFB 901), by the EU within FET project MULTIPLEX under contract no.\ 317532.} \\
%       {\Large (Regular Submission)}
}

\author{Friedhelm Meyer auf der Heide\\ 
   fmadh@uni-paderborn.de \\
   \and
   Kamil Swierkot \\
   kamil@mail.uni-paderborn.de \\
   \and
      Heinz Nixdorf Institute \& Department of Computer Science \\
       University of Paderborn\\
  Paderborn, Germany\\
   }

\date{}

\maketitle \thispagestyle{empty}

%=========================================================================
%  Abstract
%=========================================================================

\begin{abstract}
We study the complexity theory for the local distributed setting introduced by Korman, Peleg and Fraigniaud in  their seminal paper \cite{peleg2010ld}.  They have defined three complexity classes $\LD$ (Local Decision), $\NLD$ (Nondeterministic Local Decision) and $\NLDn$. The class $\LD$ consists of all languages which can be decided with a constant number of communication rounds. The class $\NLD$ consists of all languages which can be verified by a nondeterministic algorithm with a constant number of communication rounds. 
In order to define the nondeterministic classes, they have transferred the notation of nondeterminism into the distributed setting by the use of certificates and verifiers. The class $\NLDn$ consists of all languages which can be verified by a nondeterministic algorithm where each node has access to an oracle for the number of nodes.  They have shown the hierarchy $\LD\subsetneq\NLD\subsetneq\NLDn$. 

Our main contributions are strict hierarchies within the classes defined by Korman, Peleg and Fraigniaud. We define additional complexity classes: the class $\LD(t)$ consists of all languages which can be decided with at most $t$ communication rounds.
The class $\NLDf{f}$ consists of all languages which can be verified by a local verifier such that the size of the certificates that are needed to verify the language are bounded by a function from $\bO(f)$.  Our main result is the following hierarchy within the nondeterministic classes:
\begin{equation*}
\begin{matrix}
\LD&\subsetneq& \NLD(\bO(1)) & \subsetneq&\NLD(\bO(\log n) )&\subsetneq&\NLD(\bO(n)) \\ & \subsetneq&\NLD(\bO( n^{2}))&\subseteq&\NLD(\bO(n^2 + |w|)) &=& \NLD
\end{matrix}.
\end{equation*}
In order to prove this hierarchy, we give several lower bounds on the sizes of certificates that are needed to verify some languages from $\NLD$. For the deterministic classes we prove the following hierarchy:
\begin{equation*}
\LD(1)\subsetneq \LD(2) \subsetneq\LD(3) \subsetneq \ldots \subsetneq\LD.
\end{equation*} 
\end{abstract}

\bigskip

\centerline{{\bf Keywords}: Distributed Algorithms, Distributed Complexity Theory, Nondeterministic Local Decision}

\end{titlepage}

%=========================================================================
%  Introduction
%=========================================================================

\section{Introduction}

%Nowadays, parallel and distributed systems are part of most peoples' lives. One can, for example, consider Peer-to-Peer networks, programs that make use of all cores of modern processors like Matlab, or the high performance and availability clusters behind websites which, for example, are used during a search on Google. 
% So due to their many applications, relative cheap acquisition costs and the internet, the usage and the size of distributed systems has increased and is still continuously increasing. Thus,  the interest in \textit{local distributed algorithms} has increased over the last decades due to their scalability and fault-tolerance.  
%The locality of the algorithms arises from the constraint that an algorithm is only allowed to use a constant number of communication rounds and, thus, can only know a relative small portion of the network. 

   Over the last decade a lot of research has been done in order to find efficient algorithms for several problems in the distributed framework.  Thereby, researchers have achieved impressive  positive and impossibility results. An excellent overview is given by Suomela in his  paper \cite{suomela2009survey}.      
 But just recently, researchers have started to evolve  a complexity theory for the distributed framework: Korman, Peleg and Fraigniaud \cite{peleg2010ld} have  introduced a complexity theory in the $\LOCAL$ model. Like in traditional complexity theory, the focus is on decision problems.  
They have defined complexity classes which  help us to classify languages according to the hardness of solving them locally. The most restrictive complexity  class $\LD$  consists of all problems which can be decided by a local distributed algorithm, i.e. a distributed algorithm that only needs a constant number of rounds.
 
Korman, Peleg and Fraigniaud have introduced a notion of nondeterminism on the basis of certificates and local verifiers. The certificates are not allowed to depend on the nodes' identifiers. A local verifier is a local distributed algorithm which is able to use a certificate to decide whether an input belongs to the considered language, i.e., there has to exist a certificate which leads the algorithm to an accepting computation if and only if the input belongs to the language. The class $\NLD$  consists of all problems which can be decided by a local verifier.  Even though they have  already proven several structural properties of their local complexity theory in their seminal paper \cite{peleg2010ld}, there are still several open problems. 
 
Our contribution consists of two hierarchies for this complexity theory.  In the deterministic case, we define the class $\LD(t)$ which consists of all languages that can deterministically be decided in at most $t$ rounds. We  prove the hierarchy $\LD(t) \subsetneq \LD(t+1)$ for all $t\in\setN$.
% as an analogon of the hierarchy theorems known from traditional complexity theory. 
In the nondeterministic case, we  refine the class $\NLD$ by defining classes $\NLDf{f}$ which consist of all languages which can be verified by a verifier with certificates of size at most $\bO(f(n,|w|))$. Here $n$ denotes the number of nodes of the input graph, $w = (w_1, \cdots, w_n$ the input assigned to the nodes of the graph, $|w|$ the maximum length of the $w_i$'s, and $f$ some complexity function in $n$ and $w$. We will show:
\begin{equation*}
\begin{matrix}
 \NLD(\bO(1))  \subsetneq \NLD(\bO(\log n) )\subsetneq \NLD(\bO(n))  \subsetneq \NLD(\bO( n^{2})) \subseteq \NLD(\bO(n^2 + |w|))=\NLD
\end{matrix}.
\end{equation*}

G\"o\"os and Suomela have already given a hierarchy for a different distributed complexity theory in \cite{locallyCheckableProof}. In particular, they studied locally checkable proofs. Basically, nodes get a proof of a given size assigned as additional inputs which can be used to verify a property of the network. In contrast to Korman, Peleg and Fraigniaud's model, these additional inputs are allowed to depend on the given identifiers. This allows us to verify each pure graph property  with proofs of size $\bO(n^2)$ \cite{locallyCheckableProof} which does not hold in our model \cite{peleg2010ld}. Nevertheless, we will use a technique introduced by G\"o\"os and Suomela to prove some of our lower bounds. The concept of locally checkable proofs is closely related to proof labeling schemes which have been studied in several papers for different problems, e.g. \cite{korman2010proof,fraigniaud2011locality,korman2006distributed,Korman:2005:PLS:1073814.1073817}.

Our paper is structured as follows: in Section \ref{sec:Pre}, we  describe the complexity theory in the distributed framework introduced  by Korman, Peleg and Fraignaud. In Section \ref{sec:ldhierarchy}, we will present the hierarchy for the deterministic case. In Section \ref{NLDLifts}, we provide an alternative  characterization of $\NLD$ which allows to design nondeterministic local algorithms with small certificates. This technique is then used to design algorithms needed for our separations of complexity classes.  In Section  \ref{sec:nldhierarchy}, we prove the hierarchy for the nondeterministic case by complementing the just mentioned upper bounds  on the certificate size with asymptotically tight lower bounds.

%=========================================================================
%
%=========================================================================

\section{Preliminaries}\label{sec:Pre}

In this section we will briefly introduce the necessary notations and  introduce the complexity theory for the distributed setting introduced by Korman, Peleg and Fraigniaud. 

\subsection{The Model}

We will focus on  the well known $\LOCAL$ model \cite{peleg:local,suomela2009survey}. The model is based on message passing with  synchronous fault-free point-to-point communications  described by a connected, undirected graph $G=(V,E)$. We allow messages  of unbounded size. The nodes of the graphs are considered as  processors and the  edges between them are bidirectional communication links.  We often use the term \textit{network} instead of \textit{graph}. 
We assume unique identifiers for each node.  But we do not want identifiers to encode any information. Therefore, we use the concept of identifier assignments \cite{peleg2010ld}: an \textbf{identifier assignment} is a map $\Id: \V \rightarrow \{1,\ldots, \poly(|\V|)\}$ such that for all vertices $u,v \in \V$, $u\neq v$,  $\Id(u) \neq \Id(v)$ holds. 
 Initially, a processor knows its own identifier and the communication links it can use. The identifiers or  any  information  stored at  the neighbors are initially unknown.

As our main concern is to study the complexity of locality,  our complexity measure  is the number of communication rounds that is needed by the algorithm. A processor can communicate with all its neighbors within one round. We allow arbitrarily complex computations in the local computation step.  The running time of an algorithm that  is the maximum number of communication rounds that are needed by the algorithm for all possible identifier assignments and all possible inputs. We call an algorithm \textit{local} if the number of rounds for each network is bounded by a constant. 

This model is strongly connected to the fundamental notion of $t$-neighborhoods. For  $t \in \setN$,  let 
 $\NH_{t}(v)$ denote the  subgraph induced by  all nodes with distance at most $t$ to $v$, 
 %and all edges of the graph $G$ between these nodes 
 except for the edges between nodes in distance exactly $t$ to $v$. The $t$-neighborhoods include the nodes' identifiers.
We call two neighborhoods $\NH^{G}(v)$ and $\NH^{G'}(v')$ isomorphic, denoted by $\NH^{G}(v) \cong \NH^{G'}(v')$, if there exists a graph automorphism $\varphi: \NH^{G}(v) \rightarrow \NH^{G'}(v')$ such that $\varphi(v)=v'$ holds. 

\subsection{Distributed Complexity Theory}

The  complexity theory by Korman, Peleg and Fraignaud for the distributed setting focuses on decision problems.  
 A \textbf{configuration} is a pair $(G,w)$ where $G$ is a connected network, and every node $v \in \V$ is assigned as its local input a binary string $w(v)\in\{0,1\}^{*}$.
In order to transfer the concept of languages to our model, we define the set
$\setGW = \langSet{(G,w)}{ G\text{ is a connected network },w\in(\{0,1\}^{*})^{|\V(G)|} }$
as all possible configurations.  Then 
a \textbf{distributed language} is a decidable subset of $\setGW$.
% \cite{peleg2010ld}.
For example, the following two languages which involve trees will be of interest
 $$
 \Tree_{t} = \langSet{(G,\epsilon)}{ G \text{ is a tree that contains a node }  v\in \V\\ \text{ such that,} \text{ for all } v' \in \V, \quad\dist_{G}(v,v') \leq t \text{ holds}}.
 $$
We say that a distributed algorithm \textbf{decides} a distributed language $L$ if, for each identifier assignment and every configuration $(G,w)$, all nodes accept $(G,w)$ if and only if $(G,w) \in L$ holds.
%\cite{peleg2010ld}.
Although  nodes can use identifiers during their computations, nodes cannot assume  any encoded information in the identifiers. 

It is worth to note that every language can be decided by an algorithm that is allowed to use as many communication rounds as the diameter of the network. Therefore, we need to bound the number of communication rounds to define any meaningful complexity class. Since we are interested in the difficulties that arise from locality, the running time of an algorithm must be upper bounded by a constant. 
\begin{defn}
$\LD(t)$ is the class of all  languages that can be decided by a local algorithm with at most $t$ rounds.
\end{defn}
\begin{defn}
$\LD:=\bigcup_{t\in\setN}\LD(t)$ . 
\end{defn}
%An obvious observation is that $\LD$ is the union of all $\LD(t)$.

 Korman, Peleg and Fraigniaud  introduced nondeterminism in the distributed setting by giving each node its own certificate.  
Their definition of a (local) verifier is as follows: %\cite{peleg2010ld}: 
a \textbf{verifier} is a distributed algorithm that gets, additionally  to the configuration $(G,w)$, a certificate vector $c=(c(v), v\in V)$, where $c(v) \in \{0,1\}^{*}$ is known to node $v$.  We say a verifier \textbf{verifies} a language $L$ if the following holds:
\begin{itemize}
  \item If $(G,w) \in L$ holds, then there exists a certificate such that for each identifier assignment all nodes accept the configuration.
  \item If $(G,w) \notin L$ holds, then for all  certificates and  for each identifier assignment there exists a node which rejects the configuration.
\end{itemize}
The definition tries to reduce the impact of identifiers. Thus, certificates \textbf{may not depend on the identifier assignment} as a certificate must work on any assignment. The nondeterministic class is defined as follows:
\begin{defn}%\cite{peleg2010ld}
 $\NLD$ is the class of all distributed languages that can be verified by a local verifier.
\end{defn}
We define the size $|c|$ of a certificate $c$ as $\max_{v\in\V} |c(v)|.$
Moreover, for a verifier $A$ and a configuration $(G,w) \in L$, let $c_{(G,w)}^{A}$ denote the size of a certificate of minimum size which leads to an accepting computation. 
We measure the certificate size as a function $f$ in the number $n$ of nodes of $G$ and the length $|w|$ of the input.% Our complexity functions have two arguments $n$ and $|w|$. 

 As usual we require them to be monotone increasing in both arguments. We call a  local verifier \textit{$f$-bounded}, if, for each configuration $(G,w)$ for an $n$-node graph $G$, $c_{(G,w)}^{A} \leq f(G,w)$ holds.
With this notation, we can introduce subclasses of $\NLD$ with bounded certificate sizes.
\begin{defn}
Let $F$ be a set of functions $f:\setN^2 \rightarrow \setN$. Then $\NLD(F)$ is the class of all distributed languages $L$ that can be verified by a local verifier $A$ that is $f$-bounded for some $f \in F$. 
\end{defn}
Clearly,  $F \subseteq F'$ implies $\NLD(F) \subseteq \NLD(F')$.
As noted above, we will use the following sets of functions: $F=\bO(\log n)$, $F=\bO(n)$, $F=\bO(n^2)$ and $F=\bO(n^2 + |w|)$.

%%%
%%%%%%%%%%%%%%%%%%%%%%%%%%%%%%%%%%
%%%

\section{The Deterministic Hierarchy}\label{sec:ldhierarchy}

The following theorem presents a strict hierarchy within $\LD$. 
\begin{theorem}\label{thm:LD:Hierarchy}
 Let $t$ be a positive integer. Then we have 
 $ \LD(t) \subsetneq \LD(t+1).$
\end{theorem}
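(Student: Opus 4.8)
The plan is to separate $\LD(t)$ from $\LD(t+1)$ by exhibiting a language that is decidable in $t+1$ rounds but not in $t$ rounds, using the $\Tree_t$ family already introduced in the preliminaries. Concretely, I would take $L = \Tree_t$ (or a slight variant thereof). Recall $\Tree_t$ consists of configurations $(G,\epsilon)$ where $G$ is a tree with a ``center'' node $v$ whose eccentricity is at most $t$; equivalently, $G$ is a tree of radius at most $t$. Membership in $\LD(t+1)$ is the easy direction: in $t+1$ rounds every node $v$ learns $\NH_{t+1}(v)$, which is enough to (i) check locally that the graph looks like a tree — no node sees a cycle within its radius-$(t+1)$ ball, and crucially a tree of radius $\le t$ has diameter $\le 2t$, so with $t+1 \ge \dots$ rounds one can in fact see the whole structure from suitable vantage points — and (ii) detect whether some node within distance $t$ of $v$ is a center. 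I would make the round count work out by arguing that from the ball of radius $t+1$ around any node one can certify the existence of a radius-$t$ center: a node $v$ accepts iff its radius-$(t+1)$ neighborhood is a tree and contains a vertex that is within distance $t$ of every vertex of that neighborhood and is not ``blocked'' (has no subtree hanging off the boundary). The precise bookkeeping of how many rounds suffice is routine and I would just verify that $t+1$ is enough while $t$ is not.

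For the lower bound — that $\Tree_t \notin \LD(t)$ — I would use the standard indistinguishability argument for the $\LOCAL$ model. Suppose for contradiction that an algorithm $A$ decides $\Tree_t$ in $t$ rounds. I would build two configurations: a ``yes'' instance $G_{\mathrm{yes}}$, a path (or spider) whose radius is exactly $t$, so $G_{\mathrm{yes}} \in \Tree_t$; and a ``no'' instance $G_{\mathrm{no}}$ obtained by gluing two copies far apart (e.g., a long path of length $> 2t$, which is a tree of radius $> t$, hence $G_{\mathrm{no}} \notin \Tree_t$), with identifiers chosen so that some node $u$ in $G_{\mathrm{no}}$ has $\NH_t^{G_{\mathrm{no}}}(u) \cong \NH_t^{G_{\mathrm{yes}}}(u')$ for a corresponding node $u'$ of the yes-instance. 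Because $A$ runs in $t$ rounds, $u$'s decision depends only on $\NH_t(u)$, so $u$ must return the same verdict in both instances. In $G_{\mathrm{yes}}$ all nodes accept, so $u$ accepts, so $u$ accepts in $G_{\mathrm{no}}$ too; doing this for every node of $G_{\mathrm{no}}$ — each of whose radius-$t$ ball embeds into a suitable yes-instance — forces all nodes of $G_{\mathrm{no}}$ to accept, contradicting correctness of $A$ on $G_{\mathrm{no}} \notin \Tree_t$. The only subtlety is arranging the identifier assignments consistently across all the ``local views'' simultaneously; since identifiers come from a range polynomial in the number of nodes and we are free to choose them (certificates/verdicts must work for all assignments), this is standard but needs to be stated carefully.

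The step I expect to be the main obstacle is pinning down exactly which language gives a \emph{clean} $t$-versus-$(t+1)$ gap rather than, say, a $t$-versus-$2t$ gap. The naive radius statement about $\Tree_t$ has diameter up to $2t$, which a priori might need $2t$ rather than $t+1$ rounds to decide, so I would either (a) replace $\Tree_t$ by a rooted/pointed variant where one distinguished node is marked in the input, collapsing the verification to a single directed ``distance-to-root $\le t$'' check that is doable in $t+1$ rounds and provably not in $t$; or (b) use a language like ``there exists a node at distance exactly $t$ from every node'' / ``the graph is a path on some specified number of vertices'' where the threshold behavior at radius $t$ vs $t+1$ is transparent. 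In writing up, I would fix one such language, prove $L \in \LD(t+1) \setminus \LD(t)$ as above, and note that $\LD(t) \subseteq \LD(t+1)$ is immediate from the definitions, yielding the claimed strict inclusion.
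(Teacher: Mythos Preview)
Your overall strategy---separate $\LD(t)$ from $\LD(t+1)$ via $\Tree_t$---is exactly what the paper does, and your lower-bound indistinguishability scheme is morally right. But the place you flag as ``the main obstacle'' is a genuine gap in your write-up, and the paper resolves it cleanly, so let me address both directions.

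\textbf{Upper bound.} Your worry that $\Tree_t$ might only give a $t$-versus-$2t$ gap is unfounded; there is no need to retreat to a rooted variant. The paper's $(t{+}1)$-round algorithm is simply: each node $v$ gathers $\NH_{t+1}(v)$ and accepts iff (a) $\NH_{t+1}(v)$ is a tree, and (b) every pair of vertices inside $\NH_{t+1}(v)$ is at distance at most $2t$. The point you are missing is that condition (b), checked \emph{locally at every node}, forces the global diameter to be at most $2t$: if two vertices $u,u'$ were at distance $2t+1$, the vertex $u''$ at distance $t+1$ from $u$ along a shortest $u$--$u'$ path would see both $u$ and $u'$ in its $(t{+}1)$-ball and reject. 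Once the diameter is at most $2t$, some node sees the entire graph in its $(t{+}1)$-ball and condition (a) certifies it is a tree. So $\Tree_t \in \LD(t+1)$ with no bookkeeping headaches.

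\textbf{Lower bound.} Your plan---take a long path as the no-instance and argue every $t$-ball embeds in some yes-instance---can be made to work, but it is more fiddly than necessary because different nodes of a long path have non-isomorphic $t$-balls (endpoints versus interior). The paper's choice of no-instance is cleaner: a \emph{cycle} $C$ on $2t+2$ vertices. Then $(C,\epsilon)\notin\Tree_t$ trivially (it is not even a tree), and every node of $C$ has a $t$-neighborhood isomorphic to that of the center vertex $v_{t+1}$ of the path $P$ on $2t+1$ vertices, which is a yes-instance. Since the deterministic algorithm must accept $(P,\epsilon)$ under every identifier assignment, each node of $C$ (under any identifier assignment) is fooled into accepting. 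This sidesteps your ``arranging identifier assignments consistently'' subtlety entirely: one yes-instance, one neighborhood type, done.

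In short: keep $\Tree_t$, use the diameter-check algorithm for the upper bound, and swap your long-path no-instance for a $(2t{+}2)$-cycle.
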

In order to prove this result, we will consider the family of languages $\Tree_{t}$ introduced in Section 2.2.
% For every $t\in\setN$, the language $\Tree_{t}$ consists of all tree graphs with depth at most $t$. 
Firstly, we give an algorithm for the language.
\begin{satz}\label{satz:insideLD:Tree}
For every positive integer $t$ the language $\Tree_{t}$ belongs to $\LD(t+1)$. 
\end{satz}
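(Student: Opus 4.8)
The goal is to show that $\Tree_t \in \LD(t+1)$, i.e., to exhibit a deterministic distributed algorithm running in $t+1$ rounds that, on every configuration $(G,\epsilon)$ and every identifier assignment, has all nodes accept if and only if $G$ is a tree possessing a node $v$ (a "center'') within distance $t$ of every other node.

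\medskip
\noindent\textbf{Proof plan.}
The plan is to run an algorithm in which each node $v$ first collects its $(t+1)$-neighborhood $\NH_{t+1}(v)$ — this is exactly $t+1$ communication rounds, since in round $i$ every node learns the topology and identifiers of everything within distance $i$ (and after $t+1$ rounds it knows $\NH_{t+1}(v)$, which includes all vertices at distance $\le t+1$ together with the edges among those at distance $\le t$, hence in particular it sees the full induced subgraph on the ball of radius $t$ plus one extra layer of "pendant'' information). From this local view node $v$ performs two checks and accepts iff both succeed. \emph{Check 1 (acyclicity):} $v$ verifies that the subgraph it sees contains no cycle through or near $v$; more precisely, since any cycle in $G$ has some vertex, and a cycle of length $\ell$ is entirely contained in the $(\lceil \ell/2\rceil)$-neighborhood of each of its vertices, we need a bounded radius to be sure — but here we can be cleverer: I would have $v$ reject if it detects \emph{any} cycle in its view, and separately argue that if $G$ is not a tree, some node detects a short cycle, OR the center condition forces all cycles to be short. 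Actually the cleanest route: if $G$ satisfies the distance condition for some center $v_0$, then $\diam(G)\le 2t$, so every cycle has length $\le 2t+?$ and is seen by some node within radius $t+1$; conversely if some node sees a cycle, $G$ is not a tree and we correctly reject. \emph{Check 2 (centrality):} node $v$ accepts the "center part'' iff within its radius-$t$ ball it can see a vertex $u$ (possibly $u=v$) such that every vertex of $G$ is within distance $t$ of $u$ — but $v$ cannot see all of $G$. So instead each node $v$ checks whether \emph{it itself} is a center by verifying that $\NH_{t}(v)$ already contains every vertex of $G$; the subtlety is detecting "every vertex of $G$'' from a local view. Here I would use the extra $(t+1)$-st round: $v$ is a center iff no vertex at distance exactly $t$ from $v$ has a neighbor at distance $t+1$ from $v$, i.e. the radius-$t$ ball around $v$ has no outgoing edges. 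That is checkable from $\NH_{t+1}(v)$. Then nodes accept iff (the graph-is-locally-a-tree check passes) AND (there exists, within their view, a center node in the above sense) — but again existence of a center is global. Resolve this by a two-part acceptance: every node accepts iff its view is acyclic; additionally, \emph{every} node that is within distance $t$ of a genuine center will see that center's ball-has-no-outgoing-edges property and accept on those grounds, while a node far from every center will see a vertex "too far'' and reject. Since if a center exists all nodes are within distance $t$ of it, all nodes accept; if no center exists, then for every node $u$ the radius-$t$ ball around $u$ has an outgoing edge, so no node ever observes the center-witness, and we need the algorithm to reject — so the true acceptance rule is: accept iff (acyclic locally) AND (some node within distance $t$, possibly oneself, has the no-outgoing-edge property visible to me).

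\medskip
\noindent\textbf{Correctness.}
For soundness, suppose all nodes accept. Then no node's view contains a cycle; I claim $G$ is a tree: if $G$ had a cycle $C$, pick any $v\in C$ and note that since every node also reports seeing a center-witness $c$ with $\NH_t(c)$ having no outgoing edges, $c$ is genuinely a center and thus $\diam(G)\le 2t$, forcing $|C|\le 2t$, hence $C\subseteq \NH_{t+1}(v)$ (a cycle of length $\le 2t$ through $v$ lies within distance $t$ of $v$), contradicting that $v$'s view is acyclic. And the existence of a witnessed center $c$ gives the required center directly. Conversely, if $(G,\epsilon)\in\Tree_t$, fix a center $v_0$. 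Then $\NH_t(v_0)$ contains all of $V$, so it has no outgoing edges, so $v_0$ has the no-outgoing-edge property; every node $u$ satisfies $\dist(u,v_0)\le t$ and therefore sees $v_0$ and this property within $\NH_{t+1}(u)$, so the center check passes at $u$. Since $G$ is a tree, no node sees a cycle, so the acyclicity check passes at every node. Hence all nodes accept.

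\medskip
\noindent\textbf{Main obstacle.}
The delicate point, and the one I would spend the most care on, is the interplay between the \emph{local} acyclicity test and the \emph{global} "is a tree'' guarantee: a node only sees radius $t+1$, so on its own it cannot rule out a long cycle. The resolution above leverages the fact that the center witness, once observed by some node, certifies a global diameter bound of $2t$, which in turn bounds cycle lengths and pulls every cycle into somebody's local view. I would double-check the exact radius needed ($t+1$ rather than $t$) for the "ball has no outgoing edge'' reformulation of centrality and for containing the worst-case cycle, and confirm that $t+1$ rounds indeed suffice to gather $\NH_{t+1}(v)$ under the $\LOCAL$ model's convention for $\NH_t$ (which omits edges between distance-$t$ nodes), so that the outgoing-edge test is well-defined — this is the one place where the off-by-one in the definition of $\NH_t$ must be handled carefully.
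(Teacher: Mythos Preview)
There is a genuine gap in the centrality check. Your acceptance rule asks each node $v$ to accept when ``some node $c$ within distance $t$ has the no-outgoing-edge property visible to me,'' and in the completeness direction you assert that every node $u$ with $\dist(u,v_0)\le t$ ``sees $v_0$ and this property within $\NH_{t+1}(u)$.'' That assertion is false. To verify that $\NH_t(v_0)$ has no outgoing edges, a node must see every vertex at distance $t$ from $v_0$ together with all its neighbours, i.e.\ it must see the full ball of radius $t+1$ around $v_0$. If $\dist(u,v_0)=d>0$, that ball reaches distance $d+t+1$ from $u$, which exceeds $u$'s visibility radius $t+1$. Concretely, take the path $P=v_1,\ldots,v_{2t+1}\in\Tree_t$ with center $v_{t+1}$: node $v_1$ only sees $v_1,\ldots,v_{t+2}$ and hence cannot certify that $v_{t+1}$'s radius-$t$ ball has no outgoing edge on the far side (it does not even see $v_{t+3},\ldots,v_{2t+1}$). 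So under the ``verify with certainty'' reading of your rule, $v_1$ has no witness and rejects, breaking completeness.

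If instead ``visible to me'' is read permissively (accept if I cannot \emph{rule out} some $c$ being a center), then your soundness argument collapses at the step ``since every node reports a center-witness $c$ with $\NH_t(c)$ having no outgoing edges, $c$ is genuinely a center'': a node that merely fails to see an outgoing edge from $c$'s ball does not certify anything about $c$, so you cannot conclude a global diameter bound, and hence cannot bound cycle lengths either. Thus neither reading yields a correct $(t+1)$-round algorithm.

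The paper sidesteps this asymmetry entirely: instead of trying to locate a center, every node checks that its own view $\NH_{t+1}(v)$ is a tree and that $\max_{u,u'\in\NH_{t+1}(v)}\dist(u,u')\le 2t$. This is a purely local predicate on the $(t+1)$-ball; if any two vertices in $G$ were at distance $2t+1$, the node midway between them would see it and reject, forcing $\diam(G)\le 2t$, from which the existence of a center and the tree property follow. Your acyclicity-plus-diameter-bound intuition is exactly right; what fails is the attempt to have each node identify the center itself rather than just detect the diameter obstruction.
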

\begin{proof}
Let $t$ be a positive integer and let $(G,\epsilon)$ be a configuration. We consider the following local algorithm with running time $t+1$. 
\algorithm { 
Each node $v$ performs the following steps: 
\begin{enumerate}
 \item Calculate $\NH_{t+1}(v)$.
 \item Accept if $\NH_{t+1}(v)$ is a tree and 
 $ \max_{u,u' \in \NH_{t+1}(v)} \dist_{G}(u,u') \leq 2t $
 holds. Otherwise reject the input.
\end{enumerate}
}
We first show that, if the algorithm accepts $(G,\epsilon)$, the diameter of the graph is at most $2t$. For the sake of  contradiction, we assume that this is false. Let $u,u'$ be two nodes with $\dist_{G}(u,u') = 2t+1$. Then let $u''$ be a node on the shortest path from $u$ to $u'$ with $\dist_{G}(u,u'') = t+1$ and $\dist_{G}(u'',u')= t$. Thus, we have $u',u \in \NH_{t+1}(u'')$. According to the second step of the algorithm, this node will reject the input. Therefore, the diameter of the graph is at most $2t$.

This implies that there is a node $v^*$  that has distance at most $t$ to all other nodes. Then step 2 makes sure that the graph is a tree  because otherwise $v^*$ would have rejected the input. Thus, the algorithm accepts the configuration $(G,\epsilon)$ if and only if $(G,\epsilon) \in \Tree_{t}$ holds.
\end{proof}
The following proposition implies that the algorithm is optimal.
\begin{satz}
 Every local algorithm for the language $\Tree_{t}$ needs at least $t+1 \in\setN$ rounds.
\end{satz}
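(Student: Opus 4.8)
The plan is to prove a lower bound by an indistinguishability argument: exhibit two configurations, one in $\Tree_t$ and one not, that cannot be told apart by any $t$-round local algorithm. Concretely, I would fix a putative $t$-round algorithm $A$ for $\Tree_t$ and construct a "yes"-instance $G$ and a "no"-instance $G'$ together with a node $v$ in each whose $t$-neighborhoods are isomorphic (including identifiers). Since a $t$-round algorithm's decision at a node depends only on $\NH_t(v)$, the node $v$ must return the same verdict in both instances; if every node behaves consistently across the two instances, $A$ accepts one iff it accepts the other, contradicting correctness.

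For the construction I would take the "yes" instance to be a path (or a spider/star-like tree) of diameter exactly $2t$, so that it lies in $\Tree_t$ via its center, and the "no" instance to be a slightly longer path — diameter $2t+1$ or $2t+2$ — which is still a tree but has no node within distance $t$ of all others, hence is not in $\Tree_t$ (or alternatively a cycle of appropriate length, which is connected but not a tree). The point is that in a long enough path the two middle-ish nodes have $t$-neighborhoods that are just paths of length $2t$, structurally identical; by choosing the identifier assignments to agree on these radius-$t$ balls, I make the relevant nodes' views literally isomorphic. One has to be careful that \emph{every} node, not just one, has a matching partner, so the usual trick is to make $G'$ out of $G$ by "stretching" it in the middle: each node of $G'$ either has the same $t$-view as some node of $G$, or lies deep in the inserted segment where its $t$-view is an infinite-looking path segment that also appears in $G$ (if $G$ is chosen long enough, e.g., a path of length a bit more than $4t$ realizing diameter $2t$ only if we instead use a star of paths — I would actually prefer a "double spider": take the center $v^*$ and attach several paths of length $t$, giving diameter $2t$, and for the no-instance attach one path of length $t+1$).

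The cleanest realization: let $G$ be the tree obtained by joining two stars-of-paths so the eccentricity of the best center is exactly $t$, and let $G'$ be the same but with one branch lengthened by one edge, pushing the minimum eccentricity to $t+1$; then place identifiers so that the ball of radius $t$ around the critical deep node is identical in both. Every other node of $G'$ is at distance $\ge 1$ from the inserted edge in a way that its $t$-ball still embeds in $G$ (choosing branch lengths generously), so its verdict is forced to match. Then $A$ accepts $G$ (it must, since $G\in\Tree_t$) forces $A$ to accept $G'$ at every node, but $G'\notin\Tree_t$ — contradiction. Hence $A$ needs at least $t+1$ rounds, which together with Proposition \ref{satz:insideLD:Tree} shows $\Tree_t \in \LD(t+1)\setminus\LD(t)$ and proves Theorem \ref{thm:LD:Hierarchy}.

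The main obstacle I anticipate is handling the identifier-assignment quantifier correctly: the algorithm must fail \emph{for some} assignment, so I get to choose the identifiers, but I must ensure the chosen assignment is legal (injective, values polynomially bounded) on the possibly larger graph $G'$ and that the local isomorphism of $t$-neighborhoods genuinely respects the identifiers — this is the step where a naive "just relabel" can break, and it requires laying out the two graphs side by side and matching balls explicitly. A secondary subtlety is making sure $G'$ is genuinely not in $\Tree_t$: one must verify no node of the stretched tree has eccentricity $\le t$, which is a short but necessary distance computation.
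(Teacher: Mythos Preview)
Your indistinguishability approach is correct and is exactly what the paper does, but you are making it much harder than necessary. The paper takes as the yes-instance the path $P=(v_1,\ldots,v_{2t+1})$ and as the no-instance the cycle $C$ on $2t+2$ nodes---precisely the alternative you mention in passing and then abandon. The point is that \emph{every} node of $C$ has a $t$-neighborhood that is structurally a path on $2t+1$ vertices, isomorphic to $\NH_t^P(v_{t+1})$; there is no need to match near-endpoint views or to engineer a spider.

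Your identifier worry also dissolves once you exploit the quantifiers correctly. You do not need a single identifier assignment on $P$ that simultaneously matches all nodes of $C$. Fix any legal assignment on $C$; for a given node $u\in C$, its $t$-view carries some $2t+1$ identifiers arranged on a path, and there is \emph{some} legal assignment on $P$ giving $v_{t+1}$ exactly that view. Since the algorithm must accept $P$ under every identifier assignment, $v_{t+1}$ accepts under that one, hence $u$ accepts. Repeating this separately for each $u\in C$ shows all of $C$ accepts---contradiction. Your preferred double-spider construction, by contrast, is genuinely delicate: the node adjacent to the center on the lengthened branch sees one more vertex down its branch than its counterpart in $G$ does, so the $t$-views do not line up without further padding of the other branches, which in turn risks pushing the eccentricity of $G$ above $t$. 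This is fixable, but it is exactly the case analysis the cycle avoids entirely.
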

\begin{proof}
 For the sake of contradiction, we assume that there is a local algorithm which decides $\Tree_t$ in $t$ rounds. Consider the path $P=(v_{1},\ldots, v_{2t+1})$ of length $2t$. Then we have $(P,\epsilon) \in \Tree_{t}$ because every node in the path has distance at most $t$ to the node $v_{t+1}$. The algorithm must accept the configuration $(P,\epsilon)$. In particular, the node $v_{t+1}$ accepts the input. 

Now consider a cycle $C$ of size $2t+2$.  Then the configuration $(C,\epsilon)$ does not belong to the language $\Tree_{t}$. But the $t$-neighborhood of every node of the cycle is isomorphic  to the $t$-neighborhood of $v_{t+1}$ in $(P,\epsilon)$. Therefore, the algorithm must accept the configuration $(C,\epsilon)$. This is a contradiction. Thus, a local algorithm with running time $t$ cannot decide the language $\Tree_{t}$. 
\end{proof}
These results imply Theorem \ref{thm:LD:Hierarchy} as follows:
for every   $t \in \setN$ we have $\Tree_{t} \in \LD(t+1)$ and  $\Tree_{t} \notin \LD(t)$. Thus, we have $\LD(t) \neq LD(t+1)$. Since $\LD(t) \subseteq \LD(t+1)$ holds, we have $\LD(t)\subsetneq \LD(t+1)$.

%%%
%%%%%%%%%%%%%%%%%%%%%%%%%%%%%%%%%%%%
%%%

\section{The Nondeterministic Hierarchy}

In the remainder of this paper will show our main result; a hierarchy within NLD. 
\begin{theorem}\label{thm:separation:NLD}
The following hierarchy  holds in $\NLD$:
\begin{equation*}
 \begin{matrix}
 \LD & \stackrel{(i)} {\subsetneq}& \NLDf{1} &\stackrel{(ii)} {\subsetneq} & \NLDf{\log n} & \stackrel{(iii)} {\subsetneq}& \NLDf{n} \\
        &  \stackrel{(iv)} {\subsetneq}&  \NLDf{n^{2}} &\stackrel{(v)} {\subseteq}& \NLDf{n^{2} + |w|} &\stackrel{(vi)} {=} &\NLD
  \end{matrix}.
  \end{equation*}
\end{theorem}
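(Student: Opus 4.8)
The plan is to establish the chain in Theorem~\ref{thm:separation:NLD} link by link. Every inclusion ``$\subseteq$'' in the statement is immediate: $\LD\subseteq\NLDf{1}$, because a deterministic local decision algorithm is a verifier that uses the empty certificate (of size $0$); and $\NLDf{1}\subseteq\NLDf{\log n}\subseteq\NLDf{n}\subseteq\NLDf{n^{2}}\subseteq\NLDf{n^{2}+|w|}\subseteq\NLD$ are all instances of the monotonicity remark ($F\subseteq F'$ implies $\NLD(F)\subseteq\NLD(F')$), which in particular already yields step~(v). So the actual content is: for each of (i)--(iv) a witness language lying in the larger class but not in the smaller one, together with the reverse inclusion $\NLD\subseteq\NLDf{n^{2}+|w|}$ required for~(vi).

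For the membership (upper-bound) parts we would use the alternative characterization of $\NLD$ developed in Section~\ref{NLDLifts}, which is tailored to producing verifiers with small certificates. Concretely: for~(i) a language such as $\Bipartite$ is verified with a one-bit-per-node certificate --- a claimed $2$-coloring, which each node checks against its neighbours --- so $\Bipartite\in\NLDf{1}$; for~(ii) a language such as $\Tree$ is verified with a $\bO(\log n)$-bit label giving each node its distance to a designated root, placing it in $\NLDf{\log n}$; for~(iii) and~(iv) one takes languages whose membership is witnessed by $\Theta(n)$, respectively $\Theta(n^{2})$, bits of global information --- for~(iv) typically a language that forces the verifier to know essentially the whole adjacency structure of $G$ --- and turns that witness into a certificate of the stated size via the Section~\ref{NLDLifts} machinery. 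The same characterization yields~(vi): an arbitrary $L\in\NLD$ admits a verifier using a certificate of size $\bO(n^{2}+|w|)$, essentially an encoding of $G$'s adjacency structure ($\bO(n^{2})$ bits) together with the local input ($\bO(|w|)$ bits) plus $\bO(1)$ bookkeeping, with the characterization ensuring correctness; hence $\NLD\subseteq\NLDf{n^{2}+|w|}$.

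The substance of the theorem lies in the separations. For~(i) the lower bound is light: $\Bipartite\notin\LD$, since for any constant radius $t$ an odd cycle and an even cycle of comparable length present every node with the same $t$-neighbourhood, yet exactly one of the two is bipartite. For~(ii)--(iv) we must rule out verifiers with asymptotically smaller certificates, and here we would adapt the pigeonhole-and-surgery (\emph{cut-and-paste}) technique of G\"o\"os and Suomela: assuming a radius-$t$ verifier whose certificates are too small, the number of distinct radius-$t$ views-together-with-certificates on a suitably large family of yes-instances is too small, so some large yes-instance contains two far-apart regions with identical local views and identical certificates; splicing the instance at these regions produces a no-instance on which, after a suitable re-assignment of identifiers, every node's radius-$t$ view and certificate is unchanged, so the verifier still accepts it --- a contradiction. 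For~(ii) this is essentially the classical $\Omega(\log n)$ lower bound for certifying acyclicity, carried out on long paths versus long cycles. The delicate points throughout are that certificates must be identifier-independent (so the surgery has to be accompanied by a globally consistent identifier assignment under which the views match) and that the counting must be tuned to yield $\Omega(\log n)$, $\Omega(n)$ and $\Omega(n^{2})$ respectively. We expect the $\Omega(n^{2})$ bound --- where one has to exhibit instances for which distinguishing yes from no provably requires knowing $\Theta(n^{2})$ bits of adjacency data --- to be the main obstacle, whereas the upper bounds, part~(vi), and all the trivial inclusions are comparatively routine.
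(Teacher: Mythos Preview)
Your overall plan matches the paper's almost exactly: trivial inclusions, a separating language for each strict step, the G\"o\"os--Suomela cut-and-paste technique for the lower bounds, and the lift-based machinery of Section~\ref{NLDLifts} for the upper bounds and for~(vi). Your treatment of (i), (ii), (v) and (vi) is essentially what the paper does (the paper cites \cite{peleg2010ld} for (i) and for the $\bO(\log n)$ upper bound on $\Tree$, and proves the $\Omega(\log n)$ lower bound exactly by the path-versus-cycle pigeonhole argument you describe).

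The gap is in (iii) and (iv), where you leave the separating languages unspecified (``one takes languages whose membership is witnessed by $\Theta(n)$, respectively $\Theta(n^{2})$, bits''). This is precisely where the work lies, and the choice is more constrained than you suggest. Because certificates in $\NLD$ must be identifier-independent, not every graph property is in $\NLD$ at all --- so a language whose natural witness is $\Theta(n^{2})$ bits of adjacency data may simply fail to lie in $\NLDf{n^{2}}$ and is then useless for~(iv). The paper's choices are $\FixedPointFreeSymmetryOnTrees$ for~(iii) and $\EqualSizePartition$ for~(iv). The $\bO(n)$ upper bound for~(iii) exploits that \emph{every} decidable language of trees is in $\NLDf{n}$ (Lemma~\ref{satz:insideNLD:FixedPointFreeSymmetryOnTrees}), because a lift of a tree must be the tree itself and a tree encodes in $2n$ bits; the $\Omega(n)$ lower bound comes from counting non-isomorphic $n$-node trees. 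For~(iv), $\EqualSizePartition$ is not about adjacency at all --- it asks whether the $0$- and $1$-labelled nodes are equinumerous --- yet the lift-based verifier needs $\bO(n^{2})$ bits to encode the lift graph, and the $\Omega(n^{2})$ lower bound is obtained by gluing two arbitrary $n$-vertex graphs via a path and counting non-isomorphic connected graphs. So your intuition that the $n^{2}$ in~(iv) reflects ``knowing the whole adjacency structure'' is misleading; it arises from the counting in the lower-bound gadget, not from the semantics of the language.
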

\noindent\textbf{Survey of the proof:}   All inclusion in $(i)$ to $(v)$ are clear. We have to show the corresponding inequalities. The one in $(i)$ is  shown in \cite{peleg2010ld}. 
For proving the inequalities in $(ii)$, $(iii)$ and $(iv)$, we define separating languages. For (ii), the language 
 $$\Tree = \bigcup_{t\in\setN} \Tree_t$$is used. The upper bound- $\Tree \in \NLD(\bO(\log(n)))$ - is shown in \cite{peleg2010ld}. We will present the following lower bound. 
 \begin{lemma}\label{thm:tree:logn:bound}
 Every local verifier for $\Tree$  needs certificates of size $\Omega(\log n)$.
\end{lemma}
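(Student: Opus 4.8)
The plan is to establish the lower bound $\Omega(\log n)$ by an adversarial "cycle versus path" argument combined with a counting argument on certificates, adapted to the constraint that certificates may not depend on the identifier assignment. Suppose for contradiction that there is a local verifier $A$ for $\Tree$ running in $r$ rounds and using certificates of size $s(n) = o(\log n)$. The idea is to exhibit, for infinitely many $n$, a configuration $(C_n,\epsilon)$ with $C_n$ a cycle on $n$ nodes — so $(C_n,\epsilon)\notin\Tree$ — on which $A$ can nevertheless be fooled into accepting, because short certificates cannot encode enough information to distinguish a long cycle from a long path locally.

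First I would recall the structure exploited in the deterministic lower bound for $\Tree_t$: on a sufficiently long path $P$, the node $v$ in the middle has $\NH_r(v)$ isomorphic to the $r$-neighborhood of any node on a long cycle, and $(P,\epsilon)\in\Tree_{|P|/2}\subseteq\Tree$. So there is an accepting certificate assignment $c^P$ for $P$ on some fixed identifier assignment. The key step is a \emph{pigeonhole/cut-and-paste} argument: take a path $P$ on many more than $2^{s(n)+1}\cdot(\text{local view size})$ nodes; among its interior nodes, there must be two nodes $u,u'$ far apart (distance $> 2r$) whose local radius-$r$ views \emph{together with the certificates assigned to the whole $r$-ball around them} are isomorphic — there are only finitely many such "certified views" once the certificate alphabet per node is bounded by $2^{s(n)}$ and the number of nodes in a radius-$r$ ball on a path is bounded by $2r+1$. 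Cutting $P$ at $u$ and $u'$ and gluing the two pieces produces a cycle $C$ (on fewer than $n$ nodes, but we choose $n$ large enough that $s(n)$ is still tiny relative to $\log|C|$), on which we can define a certificate vector that agrees with $c^P$ on the corresponding nodes. Every node of $C$ then sees, within radius $r$, exactly the certified view it saw in $P$, so every node accepts, yet $(C,\epsilon)\notin\Tree$ — contradiction.

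The care needed is twofold, and this is where I expect the main obstacle. Because certificates must be robust against \emph{all} identifier assignments, but the verifier's acceptance is only required for \emph{some} assignment on yes-instances, one must be slightly careful: fix first a "generic" identifier assignment on $P$ for which an accepting certificate $c^P$ exists, and then, when building $C$, choose the identifier assignment on $C$ to locally match the one on $P$ around the glued regions (relabelling freely elsewhere since $A$ must accept $P$ under every assignment, in particular the ones that make the gluing consistent — more precisely, $A$ must reject $C$ under \emph{every} assignment and \emph{every} certificate, so it suffices to produce one bad assignment–certificate pair for $C$). The second delicate point is making the counting quantitative: with certificate size $s(n)$, a radius-$r$ ball on a path carries at most $(2r+1)\cdot s(n)$ bits of certificate plus the $O(r\log n)$ bits of identifiers — but we only need the \emph{certificate} part to repeat, and we get to choose the identifiers on $C$, so the relevant count is $2^{(2r+1)s(n)}$ distinct certificate-patterns; this is $n^{o(1)}$, so a path of length, say, $n/2$ already forces a repetition at distance $>2r$. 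Choosing $n$ large makes $\log|C| = \Theta(\log n) = \omega(s(n))$, completing the contradiction.

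Finally I would remark that this technique — a locality-based cut-and-paste forcing a certified view to repeat — is essentially the G\"o\"os--Suomela method referenced in the introduction, here specialized to a one-dimensional (path/cycle) family, and that the matching upper bound $\Tree\in\NLD(\bO(\log n))$ from \cite{peleg2010ld} shows the bound is tight. The main obstacle, to restate it, is handling the asymmetry between the existential quantifier over certificates/identifiers on yes-instances and the universal quantifier on no-instances cleanly enough that the glued cycle genuinely defeats the verifier; once the "fix a witnessing assignment on $P$, then match identifiers locally on $C$" bookkeeping is set up, the counting is routine.
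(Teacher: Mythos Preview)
Your proposal is correct and follows essentially the same approach as the paper: take a long path (a yes-instance), apply a pigeonhole argument on the $(2r+1)$-tuples of certificates along it to find two interior nodes with identical certified $r$-windows, and glue to form a cycle on which the inherited certificates make every node accept. The paper's write-up differs only cosmetically --- it first passes to the most frequent single certificate before pigeonholing on windows, and its explicit cycle construction duplicates the middle segment rather than simply identifying the two matching windows --- and it glosses over the identifier issue you (correctly, if with a momentary slip in the quantifier order) flag; your version is arguably cleaner on that point.
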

 For proving $(iii)$ we use the language$$
 \FixedPointFreeSymmetryOnTrees = \langSet{(G,\epsilon)}{ G \text{ is a tree and there is   a graph automorphism }\\ \lambda: \V\rightarrow\V   \text{ such }\lambda(v) \neq v \text{ for all }v\in V} .
 $$This language was already used in \cite{locallyCheckableProof} in order to prove a separation result in the setting of locally checkable proofs.  We will show upper and lower bounds for the certificate  size needed, yielding the separation result $(iii)$. 
 
\begin{lemma}\label{satz:insideNLD:FixedPointFreeSymmetryOnTrees}
Every decidable language $L \subseteq Trees$ (thus in particular $\FixedPointFreeSymmetryOnTrees$) can be decided by a local verifier with certificates of size $\bO(n)$. 
\end{lemma}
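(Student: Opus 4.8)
The plan is to exploit the fact that a tree on $n$ nodes has a canonical structure that can be encoded in $\bO(n)$ bits, so that a prover can hand each node a full description of the tree together with an embedding of the node's local view into that description. Concretely, every tree $T$ has either a center vertex or a center edge; rooting $T$ at its center gives a canonical rooted tree, and the classical AHU-style canonical encoding of a rooted tree on $n$ nodes is a balanced-parenthesis string of length $2n = \bO(n)$. The certificate at each node $v$ will consist of: (a) this canonical encoding $\sigma$ of the rooted tree (the same string for every node, of size $\bO(n)$), and (b) a pointer $p(v)$ into $\sigma$ naming the position of $v$ in the canonical rooted tree. Note that pointers have size $\bO(\log n) \subseteq \bO(n)$, so the whole certificate is $\bO(n)$, as required. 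Crucially, none of this depends on the identifier assignment — it depends only on the isomorphism type of $T$ and the abstract position of $v$ — so it is a legal certificate in the Korman–Peleg–Fraigniaud sense.

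The verifier, running in $\bO(1)$ rounds (in fact one round suffices), works as follows. First, each node $v$ checks that $\sigma$ is a syntactically well-formed balanced-parenthesis string and that $p(v)$ is a valid position in it. Then $v$ looks at the substring of $\sigma$ that $p(v)$ delimits, reconstructs the abstract parent and the multiset of abstract children that $\sigma$ prescribes for $v$, and exchanges $(\sigma, p(\cdot))$ with all its graph-neighbors. Node $v$ accepts iff: the strings $\sigma$ received from all neighbors agree with its own; the set of pointers $\{p(u) : u \sim v\}$ is exactly the set $\{\text{abstract parent of } p(v)\} \cup \{\text{abstract children of } p(v)\}$ prescribed by $\sigma$; and the degree count matches. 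A standard argument shows that if all nodes accept, the local consistency conditions glue together to force $G$ to be isomorphic (via $v \mapsto p(v)$) to the rooted tree encoded by $\sigma$ — in particular $G$ is a tree — and conversely, for a genuine tree the honest certificate described above passes. Then, on top of this "$G$ is the tree described by $\sigma$" gadget, one layers the certificate needed for the specific language $L \subseteq \mathtt{Trees}$: since $L$ is decidable, the prover can simply compute from $\sigma$ whether $(G,\epsilon) \in L$ and, if so, append to $v$'s certificate a single bit (or, more carefully, the additional data witnessing membership — but for a purely graph-theoretic property of a tree that data can itself be folded into $\sigma$, since $\sigma$ determines $T$ up to isomorphism). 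Each node recomputes the $L$-membership predicate from $\sigma$ and rejects if it is false; this is sound precisely because all nodes have verified that $\sigma$ faithfully describes $G$.

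For the application to $\FixedPointFreeSymmetryOnTrees$ one instantiates this with $L = \FixedPointFreeSymmetryOnTrees$: from $\sigma$ every node can check whether the tree it encodes admits a fixed-point-free automorphism (a decidable, indeed polynomial-time, property of the abstract tree), and accepts only if it does. Hence $\FixedPointFreeSymmetryOnTrees \in \NLD(\bO(n))$, and more generally every decidable $L \subseteq \mathtt{Trees}$ is in $\NLD(\bO(n))$.

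The main obstacle, and the point that needs the most care in the full write-up, is the soundness of the structural gadget: one must verify that the purely local checks — each node seeing only $\sigma$, its own pointer, and its neighbors' pointers — are enough to rule out a non-tree $G$ masquerading as the tree encoded by $\sigma$. The subtlety is that $\sigma$ is shared globally and cannot be forged per-node (neighbor agreement propagates it across the connected graph), but a cheating prover could try to assign pointers so that locally every adjacency looks right while globally $G$ has a cycle or is disconnected from part of $\sigma$. The fix is the usual one: since $\sigma$ encodes a tree with exactly $n-1$ parent-child pairs and each accepted node's neighborhood pointer-set matches its prescribed parent-and-children set exactly, a counting/handshaking argument on the number of edges, together with connectivity of $G$, forces the pointer map to be a graph isomorphism onto the encoded tree. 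This is exactly the style of argument used by Göös and Suomela for locally checkable proofs, which the paper has already announced it will reuse, so it should go through cleanly.
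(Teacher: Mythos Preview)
Your proof is correct and follows essentially the same approach as the paper: both hand each node an $\bO(n)$-bit encoding of the tree together with that node's position in the encoding, verify local consistency (this is exactly the paper's \emph{lift} check), and then brute-force the $L$-predicate on the encoded tree. The only cosmetic difference is in the soundness step: you argue directly by edge-counting that a connected graph passing the local checks against a tree $T'$ must have $|V|=k|V(T')|$ and $|E|=k(|V(T')|-1)$, whence connectivity forces $k=1$; the paper instead routes this through its general lift machinery, deriving the same conclusion from Lemma~\ref{satz:lift_cycle} (a non-trivial lift forces a cycle in both graphs).
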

\begin{lemma}\label{thm:lowerbound:fixedpointfreesymmetryontrees}
Every local verifier for $\FixedPointFreeSymmetryOnTrees$  needs certificates of size  $\Omega(n)$.
\end{lemma}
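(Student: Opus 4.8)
The plan is to exhibit, for every constant $t$, a family of $2^{\Omega(n)}$ ``yes''-instances of $\FixedPointFreeSymmetryOnTrees$ that are pairwise indistinguishable to any $t$-round verifier using $o(n)$-bit certificates, and such that a suitable ``mixture'' of two of them is a ``no''-instance. The instances will be \emph{caterpillars}: a spine $v_1 - v_2 - \dots - v_m$ with $m$ even, where each spine node $v_i$ carries $\ell_i \in \{1,2\}$ pendant leaves, and $(\ell_1,\dots,\ell_m)$ encodes a bit string. Since the leaves are precisely the degree-$1$ vertices and the spine induces a path, every automorphism fixes the spine setwise and acts on it either as the identity or as the reversal $v_i \mapsto v_{m+1-i}$; the reversal is an automorphism exactly when $(\ell_i)$ is a palindrome, and then — because $m$ is even — it, and every automorphism in its coset, is fixed-point-free, whereas an automorphism acting as the identity on the spine fixes all $m$ spine nodes. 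Hence such a caterpillar belongs to $\FixedPointFreeSymmetryOnTrees$ if and only if $(\ell_i)$ is a palindrome. I would use a \emph{balanced} encoding (two consecutive spine positions per information bit, one bearing $1$ and one bearing $2$ leaves, in an order dictated by the bit) so that all instances built from $r$ information bits have the same number of nodes $n = \Theta(r)$; this avoids bookkeeping with the range $\{1,\dots,\poly(n)\}$ of identifier assignments.

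Next I would run the cut-and-paste argument. Fix a $t$-round verifier $A$ and let $B$ be the ball of radius $3t$ around the middle spine edge $\{v_{m/2},v_{m/2+1}\}$, so $|B| = \Theta(t)$. Restrict attention to palindromic instances in which the information bits whose gadgets meet $B$ are fixed to a default value, leaving $\Omega(n)$ free information bits; all these yes-instances are identical on $B$. For each such $G_x$ fix a minimum-size certificate $c_x$ accepted by every node under every identifier assignment. If all $c_x$ had size $s = o(n)$, then the number of possible restrictions $c_x|_B$ — a labeling of the fixed structure $B$ by strings of length $\le s$ — would be at most $2^{(s+1)|B|} = 2^{o(n)\cdot\Theta(t)} = 2^{o(n)}$, so by pigeonhole two distinct instances $G_x \neq G_y$ would satisfy $c_x|_B = c_y|_B$.

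Glue the left half of $G_x$ to the right half of $G_y$ along the middle edge, obtaining a caterpillar $G_{x,y}$ whose spine string is the first half of $x$ followed by the reversed second half of $y$; since $x \neq y$ this string is not a palindrome, so $G_{x,y} \notin \FixedPointFreeSymmetryOnTrees$. Define $c_{x,y}$ to be $c_x$ on the left half and $c_y$ on the right half; this is consistent on $B$ because $c_x|_B = c_y|_B$. For any identifier assignment $\phi$ of $G_{x,y}$ and any node $v$, I claim the radius-$t$ view of $v$ in $(G_{x,y},c_{x,y},\phi)$ is isomorphic — structure, certificates, and, after choosing a per-node compatible identifier assignment (possible because $|G_x| = |G_{x,y}| = |G_y|$), also identifiers — to the radius-$t$ view of the corresponding node in $(G_x,c_x,\cdot)$ if $\dist(v,\text{middle edge})\le t$ or $v$ lies in the left half, and in $(G_y,c_y,\cdot)$ if $v$ lies in the right half; indeed in the first case the whole view lies inside $B$ or inside the common left half. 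Since $c_x$ resp.\ $c_y$ is accepting on the yes-instance for every identifier assignment, every node of $(G_{x,y},c_{x,y})$ accepts under every identifier assignment, contradicting that $A$ verifies the language on the no-instance $G_{x,y}$. Hence $s = \Omega(n)$.

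\textbf{Main obstacle.} The counting and gluing steps are routine once the family is set up; the delicate part is the combinatorial design of the caterpillars — proving that their automorphism group is \emph{exactly} as claimed (so that non-palindromic spine strings genuinely yield no-instances, with no accidental fixed-point-free symmetry) and arranging the balanced encoding so that gluing preserves the node count, which keeps the identifier-indistinguishability argument clean.
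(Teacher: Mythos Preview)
Your argument is correct and follows the same high-level cut-and-paste/pigeonhole template as the paper, but the instance family you build is genuinely different and, in a sense, cleaner. The paper takes two arbitrary unlabelled $n$-vertex trees $T',T''$ and joins them by an even-length path $P(\psi(n))$, using the claim that $T_n(T',v',T'',v'')$ has a fixed-point-free automorphism iff $T'=T''$; the $2^{\Theta(n)}$ yes-instances are then obtained by counting unlabelled trees via Cayley's formula, $|\setT_n|\ge n^{n-2}/n!$, and the pigeonhole is applied to the certificates on the middle of the connecting path (exactly as in their $\EqualSizePartition$ proof). Your caterpillars replace both the tree-counting step and the somewhat delicate ``iff $T'=T''$'' claim by an explicit $2^{\Theta(n)}$ family where the automorphism group is visibly $\{\mathrm{id},\text{reversal}\}$ on the spine, so membership reduces to a palindrome test; your balanced two-positions-per-bit encoding also makes the ``same size after gluing'' bookkeeping automatic, whereas in the paper this is handled by taking $T',T''$ of equal size. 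What the paper's route buys is a direct reuse of the $\EqualSizePartition$ machinery with almost no new argument; what your route buys is self-containment (no Cayley/Otter-type counting) and a completely transparent characterisation of the no-instances.
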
The separation $(v)$ is demonstrated using the language 
$$
\EqualSizePartition = \langSet{(G,\w)}{w(v) \in \{0,1\},  |\{v \in \V |Êw(v) = 0\} | = | \{v \in \V |Êw(v) = 1\} |} .
$$To the best of our knowledge, this language was not considered before for separation results. We prove upper and lower bounds on the certificate size needed, yielding the separation result (v). \begin{lemma}\label{satz:insideNLD:EqualSizePartition}
  $\EqualSizePartition$ can be decided by a local verifier with certificates of size $\bO(n^{2})$.
\end{lemma}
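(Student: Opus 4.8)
\quad The plan is to let the certificate hand every node a complete, identifier-independent picture of the configuration, so that each node can verify the global balance condition after cross-checking that this picture is consistent with what it sees locally. Concretely, I would have the prover fix an arbitrary bijection $\ell\colon\V(G)\to\{1,\dots,n\}$ --- a choice that depends only on $G$ and not on the identifier assignment, hence a legal certificate --- and then record the adjacency matrix $A\in\{0,1\}^{n\times n}$ of $G$ re-indexed through $\ell$, together with the re-indexed input $\hat w\in\{0,1\}^{n}$ given by $\hat w_{\ell(v)}=w(v)$. Node $v$ receives the triple $(\ell(v),A,\hat w)$, which has size $\bO(n^{2})$, as required; one could alternatively obtain the same effect through the characterization of $\NLD$ developed in Section~\ref{NLDLifts}.

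The verifier uses a single round. After collecting the certificates of its neighbours, node $v$ checks: (a) all neighbours store the same pair $(A,\hat w)$, $A$ is symmetric with zero diagonal, its side length $m$ equals the length of $\hat w$, $1\le\ell(v)\le m$, and the graph $H$ encoded by $A$ is connected (a purely local computation on $v$'s copy of $A$); (b) the value $\ell(v)$ and the values $\ell(u)$ over the neighbours $u$ of $v$ are pairwise distinct, and the set $\{\ell(u):u\text{ neighbour of }v\}$ equals the set of neighbours of $\ell(v)$ recorded in $A$; (c) $\hat w_{\ell(v)}=w(v)$; (d) $|\{i:\hat w_i=0\}|=|\{i:\hat w_i=1\}|$. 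Node $v$ accepts iff all checks pass. Completeness is immediate, since for $(G,w)\in\EqualSizePartition$ the certificate described above satisfies every check.

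For soundness, suppose every node accepts on some certificate. By (a) and connectedness of $G$ all nodes share one pair $(A,\hat w)$, so $H$ is a fixed connected graph on $m$ vertices and $\hat w\in\{0,1\}^{m}$. Checks (b) and (c) say exactly that $\ell$ is a graph homomorphism $G\to H$ that is a bijection on every closed neighbourhood and satisfies $w=\hat w\circ\ell$. Because $G$ and $H$ are connected, $\ell$ is onto (any vertex of $H$ adjacent to the image of $\ell$ lies in the image), hence $\ell$ is a covering map; consequently the fibres $\ell^{-1}(j)$ all have a common size $k\ge 1$, since for every edge $\{i,j\}$ of $H$ the $G$-edges between $\ell^{-1}(i)$ and $\ell^{-1}(j)$ form a perfect matching. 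Therefore $|\{v:w(v)=b\}|=k\,|\{j:\hat w_j=b\}|$ for $b\in\{0,1\}$, and check (d) forces these two counts (at $b=0$ and $b=1$) to be equal; thus $(G,w)\in\EqualSizePartition$.

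The step I expect to be the real obstacle is this soundness argument, and specifically the point that the verifier cannot pin the certificate graph $H$ down to $G$ itself: the labelling $\ell$ need not be injective and may only present $G$ as a $k$-fold cover of a smaller graph $H$. The resolution is the connectedness test on $H$ in (a) together with the exact-neighbourhood test in (b), which upgrade $\ell$ to a covering map; since balancedness of the input is preserved when passing to covers, a cheating prover gains nothing. Everything else --- that one round suffices, that the checks are local, and that the certificate has size $\bO(n^{2})$ (indeed $\bO(m^{2})\le\bO(n^{2})$ since $n=km$) --- is routine.
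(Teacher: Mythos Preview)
Your proposal is correct and follows essentially the same route as the paper: your certificate $(\ell(v),A,\hat w)$ is precisely the paper's lift certificate $((G',w'),\lambda(v))$ spelled out concretely, and your soundness argument---$\ell$ is a covering map onto the connected certificate graph, hence has equal-size fibres, so balancedness of $\hat w$ transfers to $w$---is exactly the content of the paper's Lemma~\ref{satz:lift} and Lemma~\ref{folgerung:lift}. The one refinement you add is the explicit connectedness check on $H$ in step~(a), which cleanly justifies surjectivity of $\ell$; the paper's lift verifier leaves this implicit.
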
\begin{lemma}\label{thm:lowerbound:eqsizepartition}
Every local verifier for  $\EqualSizePartition$ needs certificates of size  $\Omega(n^{2})$.
\end{lemma}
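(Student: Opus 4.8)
The plan is a disassembly–reassembly (``crossing'') argument in the spirit of Göös and Suomela, carried out on a family of configurations dense enough to carry $\Omega(n^{2})$ bits of structure. Assume toward a contradiction that $\EqualSizePartition$ is verified by a local verifier $A$ with running time some constant $t$ and with certificates of size $s(n)$, and suppose $s(n)$ is $o(n^{2})$. I will manufacture a configuration outside $\EqualSizePartition$ that $A$ nonetheless accepts.

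\textbf{Step 1: a rich family of yes-instances.} Fix $n$ and a template: a path $Q$ of length $\gg t$, to whose two endpoints we attach a \emph{left block} built from an essentially arbitrary graph $H_{1}$ on $\Theta(n)$ vertices and a \emph{right block} from an arbitrary graph $H_{2}$ on $\Theta(n)$ vertices, where $H_{1}$ and $H_{2}$ are constrained only inside the radius-$2t$ ball of their attachment vertex, so that this ball is one fixed gadget across the whole family. The inputs $w\in\{0,1\}^{V}$ are chosen so that in every member the number of $0$'s equals the number of $1$'s (so every member lies in $\EqualSizePartition$), and — this is the point that must be arranged carefully, see Step 3 — so that each block carries a \emph{structure-dependent} imbalance determined by $H_{i}$. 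Since an arbitrary graph on $\Theta(n)$ vertices encodes $\Theta(n^{2})$ bits, the family has $2^{\Omega(n^{2})}$ distinct left blocks.

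\textbf{Step 2: locality and pigeonhole.} Because $A$ halts in $t$ rounds, the verdict of any vertex on $Q$ depends only on its radius-$t$ ball, which by construction lies inside the fixed part of the configuration and exposes the certificate on an $\bO(1)$-size ``interface'' set $I$; hence the behaviour of $A$ along $Q$ is a function of the certificate restricted to $I$, an object of $\bO(s(n))$ bits. Fixing one accepting certificate per family member and applying pigeonhole — $2^{\bO(s(n))} < 2^{\Omega(n^{2})}$ once $s(n)$ is $o(n^{2})$ — I get two members that differ in their blocks but whose certificates agree on $I$. Splicing the left block and its certificate of the first onto the right block and its certificate of the second, and keeping the common values on $I$, produces a configuration $(G',c')$ in which every vertex sees precisely the radius-$t$ ball and certificate it had in one of the two originals; so every vertex accepts, under every identifier assignment, and $A$ accepts $(G',c')$.

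\textbf{Step 3: the spliced instance must leave the language — the main obstacle.} What remains, and what I expect to be the crux, is to engineer the family so that the spliced $G'$ genuinely has unequal numbers of $0$'s and $1$'s. This is awkward precisely because membership in $\EqualSizePartition$ depends only on the global bit-count, which is blind to the graph structure that supplies the $2^{\Omega(n^{2})}$ fan-out; a naive ``bridge between two halves'' only forces the comparison of two $\bO(\log n)$-bit counts and yields nothing beyond $\Omega(\log n)$. The resolution is to couple the encoded structure to the labels so that a correct verdict forces the verifier to certify a \emph{faithful census / spanning structure} of each block — for instance by letting a portion of a block's labels be read off an auxiliary gadget whose size or composition is dictated by the encoded graph, or by gluing blocks through a structure-dependent identification of vertices — so that after the splice the two certified ``censuses'' no longer fit together and the count is off, while no $o(n^{2})$-bit certificate can supply the missing structural information across the $\bO(1)$-size interface. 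Getting this to simultaneously preserve (i) $2^{\Omega(n^{2})}$ instances, (ii) an $\bO(1)$-size interface, and (iii) a bona fide non-member after splicing is the technical heart of the proof; granting it, Steps 1–2 turn $s(n)=o(n^{2})$ into a contradiction, giving $s(n)=\Omega(n^{2})$, which together with Lemma~\ref{satz:insideNLD:EqualSizePartition} fixes the certificate size of $\EqualSizePartition$ at $\Theta(n^{2})$.
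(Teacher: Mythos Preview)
Your framework in Steps~1--2 matches the paper's: hang two arbitrary connected $n$-vertex graphs off the ends of a path of length $4t{+}4$, giving $2^{\Omega(n^{2})}$ yes-instances, and pigeonhole on the certificate restricted to the middle $2t{+}2$ path vertices. The genuine gap is exactly where you place it, in Step~3, and your proposed repair --- forcing the labels to encode structural information about the blocks so that a splice breaks the count --- is both vague and more complicated than what is actually needed.

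The paper's resolution of Step~3 is far simpler than the structure-to-label coupling you sketch. It does not make the labels carry structural information at all. Instead: assign every vertex of the left block the label $0$, every vertex of the right block the label $1$, and let the $4t{+}4$ path vertices alternate; these are the $(0,1)$-instances. In parallel build the $(1,0)$-instances: left block all $1$, right block all $0$, with the \emph{same} alternating path labels. Both families consist of yes-instances, and --- this is the point --- the $t$-neighbourhoods of the middle $2t{+}2$ path vertices are identical as labelled subgraphs across both families. Running the pigeonhole over both families together and finding a collision between a $(0,1)$-instance and a $(1,0)$-instance, the splice (left half of the first, right half of the second) produces a configuration with both blocks labelled $0$ and the path still alternating: $2n{+}2t{+}2$ zeros against $2t{+}2$ ones, hence outside $\EqualSizePartition$. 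Your worry that ``membership depends only on the global bit-count, which is blind to the graph structure'' is answered not by tying labels to structure but by using \emph{two} block-colourings of the same structural template; the $2^{\Omega(n^{2})}$ structural fan-out drives the pigeonhole, and the two-colouring device is what makes the resulting splice a no-instance. What you identified as ``the technical heart of the proof'' dissolves once you allow the block colours, not the block structure, to be the thing that breaks under splicing.
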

%The inclusion $(v)$ is obvious. 
The proof of $(vi)$ follows from \begin{lemma}\label{NLDn2w}
 Let $L \in \NLD$  be a distributed language from $\NLD$. Then $L$ is decidable by a local verifier with certificates of  size at most $\bO(n^{2} + |w|)$. Thus, $\NLD = \NLD(\bO(n^{2} + |w|))$ holds.
\end{lemma}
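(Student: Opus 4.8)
The goal is to show that any language $L \in \NLD$, verified by some local verifier $A$ with running time $t = \bO(1)$ and arbitrary certificate sizes, can already be verified using certificates of size $\bO(n^2 + |w|)$. The plan is to exhibit a new verifier $A'$ whose certificate at each node encodes essentially the whole network together with the relevant part of the input, in such a compact way that the total bound is $\bO(n^2 + |w|)$, and which then simulates $A$ internally. Since $A'$ is allowed to reconstruct the global picture from its certificate, it can in particular compute whatever $A$ would accept on.

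First I would describe the certificate of $A'$. On a configuration $(G,w) \in L$ with $n$ nodes, pick any identifier assignment $\Id_0$ that uses only identifiers in $\{1,\dots,n\}$ (this is legal, since the identifier range is $\{1,\dots,\poly(n)\}$, and we are only choosing one convenient assignment for the purpose of describing the certificate — the certificate itself must not depend on the \emph{actual} assignment the adversary picks, so the description must be given in a way that is checkable against any assignment). The certificate $c(v)$ given to node $v$ consists of: (a) a description of the graph $H$ on vertex set $\{1,\dots,n\}$ that is isomorphic to $G$ — this costs $\bO(n^2)$ bits as an adjacency matrix; (b) the input string $w(i)$ attached to each virtual vertex $i$ of $H$ — this costs $\bO(n \cdot |w|)$ bits, which is $\bO(n^2 + |w|)$ only if $|w| = \bO(n)$; here one must be slightly more careful and instead store, at node $v$, just enough to let $v$ identify which virtual vertex it is and to let the nodes collectively certify the input. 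The cleanest route: store in $c(v)$ the adjacency matrix of $H$ (cost $\bO(n^2)$) plus $v$'s own input $w(v)$ (already known to $v$, so free) plus the index $\iota(v) \in \{1,\dots,n\}$ of the virtual vertex that $v$ is supposed to play (cost $\bO(\log n)$), and additionally store the \emph{hashes} or the full list of inputs only where needed. Actually the inputs of \emph{other} nodes are needed to simulate $A$; the honest accounting is that the certificate stores the vector $(w(1),\dots,w(n))$ of all virtual inputs, of total length $\bO(n\cdot|w|)$, and one observes $n \cdot |w| \le n^2 + |w|$ fails in general — so the correct statement and the correct certificate must put the shared global data ($H$ and the input vector) into the certificates in a way whose \emph{maximum per-node length} is $\bO(n^2 + |w|)$: the adjacency matrix contributes $\bO(n^2)$, and the input vector, stored in full at every node, contributes $\bO(n|w|)$. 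To get $\bO(n^2+|w|)$ one distributes the input vector: node $i$ only needs to carry its own $w(i)$ (free) and the structural map; then during the $t$ simulated rounds, inputs propagate. But $A$ itself already had access only to $t$-neighborhood inputs, so actually each node only ever needs inputs within distance $t$, which is $\bO(1)$ many strings, contributing $\bO(|w|)$. Combining, $|c(v)| = \bO(n^2) + \bO(|w|) = \bO(n^2+|w|)$, as claimed.

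Next I would specify the verifier $A'$ and its checking phase, which is where the real work lies. Each node $v$ reads from its certificate the claimed graph $H$, its claimed index $\iota(v)$, and runs, say, $t+1$ (or a constant number of) rounds of communication in which every node sends its $(\iota,\text{claimed }H)$ to its neighbors. Node $v$ then verifies: (1) all neighbors report the same $H$; (2) the indices $\iota(\cdot)$ are consistent with adjacency in $H$ — i.e., $u$ is a neighbor of $v$ in $G$ if and only if $\{\iota(u),\iota(v)\}$ is an edge of $H$, checkable locally because $v$ sees exactly its $G$-neighbors; (3) $\iota$ is injective on $v$'s neighborhood. A standard argument (cf.\ the technique in \cite{locallyCheckableProof} and its use for the earlier lemmas) shows that if all these local checks pass everywhere, then $\iota$ is a graph isomorphism $G \to H$ globally: local agreement on the matrix plus local edge-consistency, propagated around a connected graph, forces global consistency. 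Having certified that $H \cong G$ via $\iota$, node $v$ now \emph{simulates} the original verifier $A$: it knows $H$ entirely, it can fix the canonical identifier assignment $\Id_0$ on $H$, it can gather the inputs within the simulated $t$-ball around $\iota(v)$ (collected during the checking rounds), and it computes the decision $A$ would make at virtual vertex $\iota(v)$ under $\Id_0$ on configuration $(H, w)$. It accepts iff that simulated decision is "accept" (and all structural checks passed).

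Finally I would argue correctness of $A'$. If $(G,w) \in L$: take the certificate above with $H$ a canonical copy of $G$; all structural checks pass for every identifier assignment (they only involve certificates and the true adjacency of $G$, not identifiers), and since $(H,w) = (G,w) \in L$ and $A$ verifies $L$, there is a certificate making $A$ accept $(H,w)$ under $\Id_0$; fold that certificate of $A$ into $c(v)$ as well (another $\bO(1)$-neighborhood's worth of strings, absorbed into the bound). Then every node of $A'$ accepts. Conversely, if $(G,w) \notin L$: for any certificate presented to $A'$ and any identifier assignment, either some structural check fails at some node — rejection — or all pass, in which case the encoded $H$ truly satisfies $H \cong G$ and the carried inputs are the true inputs, so the simulated run of $A$ is a legitimate run of $A$ on $(G,w) \cong (H,w) \notin L$ with \emph{some} certificate, and by correctness of $A$ there is a node where $A$ rejects, hence the corresponding $A'$-node rejects. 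Thus $A'$ verifies $L$ with certificates of size $\bO(n^2+|w|)$, giving $\NLD \subseteq \NLD(\bO(n^2+|w|))$; the reverse inclusion is trivial, so equality holds.

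The main obstacle, and the step I would be most careful about, is the certificate-size bookkeeping together with the fact that certificates may not depend on the identifier assignment: one must make sure the structural checks (2)–(3) are expressible purely in terms of certificates and the genuine communication graph — which they are, since "who my neighbors are" is available to a node independently of identifiers — and one must confirm that carrying the full adjacency matrix ($\bO(n^2)$) plus only local input strings ($\bO(|w|)$, since $A$ runs in $\bO(1)$ rounds and thus never consults more than constantly many input strings) really lands inside $\bO(n^2+|w|)$ rather than the naive $\bO(n|w|)$. Everything else — the propagation argument that local matrix-agreement implies a global isomorphism, and the simulation of $A$ — is routine and parallels the proofs of Lemmas \ref{satz:insideNLD:FixedPointFreeSymmetryOnTrees} and \ref{satz:insideNLD:EqualSizePartition}.
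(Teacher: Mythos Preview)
There are two genuine gaps. First, your claim that the local consistency checks (1)--(3) force $\iota$ to be a global isomorphism $G\to H$ is false: they only make $\iota$ a covering map, i.e.\ they certify that $(G,w)$ is a lift of $(H,w')$, not that the two configurations coincide. Take $G$ a cycle on six vertices, $H$ a cycle on three vertices, and $\iota$ the natural $2$-to-$1$ map; every node sees its two $G$-neighbours mapped bijectively onto the two $H$-neighbours of its image, degrees match, and all nodes agree on $H$, yet $G\not\cong H$. Second, your plan to ``fold that certificate of $A$ into $c(v)$'' and declare it ``absorbed into the bound'' cannot work: the certificates used by $A$ are a~priori unbounded in $n$ and $|w|$---that is precisely the deficiency the lemma is meant to remedy---so carrying even a single $A$-certificate string per node destroys any size guarantee you are trying to establish.

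The paper fixes both points at once. It never claims an isomorphism; instead it proves separately (Lemma~\ref{satz:lift:NLD}) that every $L\in\NLD$ is closed under $t$-lifts, where $t$ is the running time of the given verifier $A$. The new verifier carries $c(v)=((G',w'),\lambda(v))$, checks the $t$-lift conditions, and then---rather than simulating $A$---tests \emph{by brute force} whether $(G',w')\in L$; this is a purely local computation since the whole of $(G',w')$ sits in the certificate and $L$ is decidable by definition. If all nodes accept, then $(G,w)$ is a $t$-lift of some $(G',w')\in L$, hence $(G,w)\in L$ by the closure lemma; conversely, for $(G,w)\in L$ one takes $(G',w')=(G,w)$ and $\lambda$ the identity. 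No certificate of $A$ is ever stored, so the new certificate size is just the cost of encoding $(G',w')$ together with the label $\lambda(v)$.
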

All new upper bounds, i.e. the Lemmata \ref{NLDn2w}, \ref{satz:insideNLD:FixedPointFreeSymmetryOnTrees}, and \ref{satz:insideNLD:FixedPointFreeSymmetryOnTrees} are presented in Section \ref{NLDLifts}. For this, we present extensions of the notion of lifted configurations introduced in \cite{fraigniaud2010local}. Insights into this concept allow a characterization of $\NLD$ yielding Lemma \ref{NLDn2w}, and the upper bounds in Lemmata \ref{satz:insideNLD:FixedPointFreeSymmetryOnTrees} and \ref{satz:insideNLD:EqualSizePartition}. The lower bounds, i.e.  the Lemmata \ref{thm:lowerbound:fixedpointfreesymmetryontrees} and \ref{thm:lowerbound:eqsizepartition}, are inspired by an approach use in \cite{locallyCheckableProof}. This approach is as follows:   One  construct a graph from two graphs  which are connected by a path. Some of these constructed graphs belong to the language and other do not belong to the language. Then one only considers the certificates of some  nodes on the path to construct a certificate for a negative graph leading to an accepting computation.

%%%
%%%%%%%%%%%%%%%%%%%%%%%%%%%%%%%%%%%%%%%%
%%%

\section{The Upper Bounds}\label{NLDLifts}

In this section, we will describe and extend the notation of lifts introduced in \cite{fraigniaud2010local}. It will allow us 
%to describe properties of 
characterize languages from $\NLD$ in a way  that yields the $\bO(n^{2} + |w|)$ upper bound for the certificate sizes of all languages in $\NLD$. This proves  Lemma \ref{NLDn2w}.
%This yields the upper bound for the certificate size for all languages from $\NLD$. 

We call a configuration $(G,w)$  a \textbf{lifted configuration} of a configuration $(G',w')$, if there exists a surjective map $\lambda: \V(G) \rightarrow \V(G')$ such that, on input $(G,w)$, every node $v$ accepts  the certificate $c(v) = ((G',w'), \lambda(v))$ with the following verifier:
\algorithm {  
Each node $v$ performs the following:
\begin{enumerate}
 \item Check whether all neighbors have the same entry $(G',w')$ and reject otherwise.
 \item Check whether for all neighbors $u \in \NH^G(v)$, both    $\lambda(u) \in \NH^{G'}(\lambda(v))$ and $|\NH^G(v)| = |\NH^{G'}(\lambda(v))|$ hold.
 \item Check whether $w'(\lambda(v)) = w(v)$ holds.
\end{enumerate}
}
The configuration $(G',w')$ is called the lift of the configuration $(G,w)$. An example is shown in Figure \ref{fig:lifts:examples}. Intuitively,  a lift must be a graph which preserves most characteristics of the lifted configuration and which can be embedded into the lifted configuration such that all nodes of the lift occur equally often in the lifted configuration. Korman, Peleg and Fraignauid used this technique to show  for the language $\lang{containment}$ that it belongs to $\NLD$ in \cite{fraigniaud2010local}. Moreover, they claimed that $\NLD$ must be strongly related to the lifts. But, they did not give any proof for this.

\subsection{A Verification Scheme of All Languages from $\NLD$}
Firstly, we will show the importance of lifts for the class $\NLD$, namely that all languages from $\NLD$ can be verified with this technique. 
We say that a configuration $(G,w)$ is a \textbf{t-lifted configuration} of a configuration $(G',w')$ if the lift algorithm checks for each $v\in \V$ whether all nodes in its $t$-neighborhood are correctly labeled according to $(G',w')$. We show that languages from $\NLD$ "are closed under $k$-lifts".
\begin{lemma}\label{satz:lift:NLD}
 Let $ L \in \NLD$. Then there exists a $t \in \setN$ such that, if $(G,w)$ is a $t$-lifted configuration of a configuration $(G',w') \in L$, then $(G,w)$ belongs to $L$. 
\end{lemma}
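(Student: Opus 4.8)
The plan is to exploit the verifier $A$ for $L$ directly: since $A$ is local, it runs in some constant number $r$ of rounds, and I claim that $t = r$ works. The key observation is that the $t$-lift algorithm, by definition, guarantees that every node $v$ of $(G,w)$ can certify a local isomorphism between its $t$-neighborhood $\NH^G_t(v)$ and the $t$-neighborhood $\NH^{G'}_{\lambda(v)}(\cdot)$ around its image in $(G',w')$, including the input labels. First I would make precise the claim that, given a correct $t$-lift certificate vector, for each $v \in \V(G)$ there is an isomorphism $\varphi_v : \NH^G_t(v) \to \NH^{G'}_t(\lambda(v))$ with $\varphi_v(v) = \lambda(v)$ and $w(u) = w'(\varphi_v(u))$ for all $u$ in the $t$-neighborhood. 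This should follow by an induction on the radius, propagating the per-node conditions (membership $\lambda(u) \in \NH^{G'}(\lambda(v))$, degree equality $|\NH^G(v)| = |\NH^{G'}(\lambda(v))|$, and label agreement) from the pointwise checks of the lift verifier outward; the degree-equality check is what prevents the map from collapsing edges and forces it to be a genuine local isomorphism.

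Next I would use the fact that a local distributed algorithm's output at a node depends only on the isomorphism type of its $r$-neighborhood, identifiers included. This is the standard characterization of $\LOCAL$ algorithms. The subtlety is that identifiers matter: the verifier's behavior may depend on the identifier assignment, and we are not free to choose it. However, the definition of "verifies" quantifies over \emph{all} identifier assignments. So here is the argument: suppose $(G,w)$ is a $t$-lift of $(G',w') \in L$ but $(G,w) \notin L$. Since $(G',w') \in L$, there is a certificate vector $c'$ for $(G',w')$ that makes $A$ accept under every identifier assignment of $G'$. I would then build, from $c'$ and the surjection $\lambda$, a certificate vector for $(G,w)$, namely $c(v) := c'(\lambda(v))$, and show $A$ accepts $(G,w)$ with this $c$ under every identifier assignment $\Id$ of $G$ — contradicting $(G,w) \notin L$. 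To see the acceptance: fix $\Id$ on $G$; for each node $v$, its $r$-view in $(G, w, c, \Id)$ is isomorphic (via $\varphi_v$, using $t = r$) to the $r$-view of $\lambda(v)$ in $(G', w', c')$ under the identifier assignment $\Id' := \Id \circ \varphi_v^{-1}$ restricted to $\NH^{G'}_r(\lambda(v))$.

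The one genuine obstacle here is that the induced identifier values $\Id \circ \varphi_v^{-1}$ on different local neighborhoods of $G'$ need not patch together into a single globally consistent identifier assignment of $G'$ — and in particular need not be injective on all of $G'$, since $\lambda$ collapses many nodes of $G$ to one node of $G'$ and distinct copies carry distinct $G$-identifiers but map to the same $G'$-node. The fix is to observe that we only need \emph{local} consistency: $A$ runs for $r$ rounds, so node $\lambda(v)$'s decision in $G'$ depends only on identifiers within $\NH^{G'}_r(\lambda(v))$, and there $\varphi_v^{-1}$ is a bijection onto $\NH^G_r(v)$, so the pulled-back identifiers are distinct on that ball. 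One then extends this partial injective labeling of a radius-$r$ ball of $G'$ to a full identifier assignment $\Id'_v$ of $G'$ arbitrarily (respecting injectivity and the polynomial range, which is possible since the range is large enough); since $c'$ forces acceptance under \emph{every} identifier assignment, in particular under $\Id'_v$, node $\lambda(v)$ accepts, hence $v$ accepts its view. As $v$ was arbitrary, all nodes of $(G,w)$ accept under $\Id$, and as $\Id$ was arbitrary, $c$ is a valid accepting certificate for $(G,w)$, forcing $(G,w) \in L$. I would close by noting the displayed steps require only that $t \ge r$, so taking $t = r$ (the round complexity of the verifier $A$) proves the lemma.
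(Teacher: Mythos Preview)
Your argument is correct and follows the same route as the paper: take $t$ to be the round complexity of the verifier $A$, pull the accepting certificate $c'$ for $(G',w')$ back along $\lambda$ to obtain $c(v)=c'(\lambda(v))$, and use that each node's $t$-view in $(G,w,c)$ matches that of its image in $(G',w',c')$ to force acceptance, contradicting $(G,w)\notin L$. Your treatment of identifiers---extending the locally pulled-back labels to a full assignment on $G'$ and invoking that $c'$ works for \emph{every} identifier assignment---is more explicit than the paper's proof, which simply asserts the neighborhood isomorphism; this is a genuine clarification rather than a different approach.
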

\begin{proof}
If $L\in\NLD$ holds, there exists a verifier $A$ that verifies $L$ in some number $t$ of rounds. Let $(G,w)$ be a $t$-lifted configuration of $(G',w')$ with  map $\lambda:\V \rightarrow \V'$ and $(G',w') \in L$. Since $(G',w')\in L$ holds, there exists a certificate $c'$ such that  $A$ accepts $(G',w')$ with the certificate $c'$. We assume $(G,w)\notin L$. We define the certificate $c$ by $c(v) = c'(\lambda(v))$. Then for each $v \in \V$ we have  $\NH_{t}^{G,w,c}(v) \cong\NH_{t}^{G',w',c'}(\lambda(v))$. Thus, $A$ accepts  $(G,w) \notin L$ with certificate $c$,  contradicting the correctness of the algorithm. Thus, $(G,w) \in L$ holds.
\end{proof}
Computing the length of the above certificates yields our  $\bO(n^{2} + |w|)$ upper bound on the required certificate size for languages in $\NLD$, and proves Lemma \ref{NLDn2w}.

%%%%%%%%%%%%%%%%%%%%%%%%%%%%%%%%%%%
\subsection{Linear Size Certificates for Fixed Point Free Symmetries on Trees }

Now, we want to know whether a graph possesses a fixed point free symmetry. In order to prove the containment to $\NLD$, we restrict the language to trees. In order to find a verifier with smallest certificate size,  we will use the notation of lifts and use a compressed description of the lift. Since the lift needs to be a tree, it can be encoded  with $2n$ bits \cite{zaks1980lexicographic}.

Before we can prove the upper bound, we need to study the relationship of lifts and lifted configurations.
Therefore, we assume  $\V(G') = \{1,\ldots,|\V(G')|\} \subseteq \setN$ and  define the set
$ \V_\lambda(i) = \{v \in V | \lambda(v) = i \} $
of $i\in\{1,\ldots, |\V|\}$ labeled nodes.
We are ready to state some simple properties.
\begin{lemma}\label{satz:lift}
Let $(G,\w)$ be the lifted configuration of a configuration $(G',\w')$. Then we have:
\begin{enumerate}
  \item For every $i \in \lambda(\V)$ and every $u \in V_\lambda(i)$ we have  $\w(u) = \w'(i)$ and $\deg_{G'}(i) = \deg_{G}(u)$. 
  \item For every $v \in \V(G)$ the neighborhoods $\NH^{G,w}(v)$ and $\NH^{G',w'}(\lambda(v))$ are isomorphic. 
  \item There exists an $l \in \setN$ such that, for every $i \in \lambda(\V)$,  $|\V(i)| = l$ holds.
\end{enumerate}
\end{lemma}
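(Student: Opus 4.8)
The plan is to obtain parts~1 and~2 by directly unwinding the definition of a lifted configuration, and to obtain part~3 by a ``covering map has constant fibre size'' argument that propagates a local fibre-equality along the connected graph $G'$. For part~1, fix $i \in \lambda(\V)$ and $u \in \V_\lambda(i)$. Step~3 of the verifier defining lifted configurations, run at the node $u$, gives $\w(u) = \w'(\lambda(u)) = \w'(i)$ at once. For the degree statement, note that $\NH^G(u)$ is the $1$-neighbourhood of $u$, so $|\NH^G(u)| = \deg_G(u)+1$, and likewise $|\NH^{G'}(i)| = \deg_{G'}(i)+1$; the cardinality check in step~2 then yields $\deg_G(u) = \deg_{G'}(i)$.

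For part~2, the point is that, by the definition of a lifted configuration, the restriction $\lambda|_{\NH^G(v)}$ is a bijection of $\NH^G(v)$ onto $\NH^{G'}(\lambda(v))$ that sends $v$ to $\lambda(v)$, and by part~1 it maps every vertex $x$ of $\NH^G(v)$ to a vertex with $\w'(\lambda(x)) = \w(x)$. Since for $t=1$ the definition of $\NH_t$ deletes all edges not incident to the root, each of $\NH^{G,\w}(v)$ and $\NH^{G',\w'}(\lambda(v))$ is a star centred at its root; hence any bijection between them that matches the roots and respects the $\w$-labels is automatically a labelled-graph isomorphism, and $\lambda|_{\NH^G(v)}$ is such a bijection. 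This gives $\NH^{G,\w}(v) \cong \NH^{G',\w'}(\lambda(v))$.

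For part~3, I would first prove that any two $G'$-adjacent vertices have fibres of the same size. Let $i \sim j$ be an edge of $G'$; since $G'$ is simple, the fibres $\V_\lambda(i)$ and $\V_\lambda(j)$ are disjoint. I count the edges of $G$ with one endpoint in $\V_\lambda(i)$ and the other in $\V_\lambda(j)$ in two ways. For each $u \in \V_\lambda(i)$ the vertex $j$ lies in $\NH^{G'}(i)$, so by the local bijectivity of $\lambda$ on $\NH^G(u)$ there is exactly one vertex of $\NH^G(u)$ mapped to $j$; it differs from $u$ because $\lambda(u)=i\neq j$, hence it is the unique neighbour of $u$ in $\V_\lambda(j)$. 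Each such edge has its $\V_\lambda(i)$-endpoint determined, so the number of these edges equals $|\V_\lambda(i)|$; by the symmetric argument it also equals $|\V_\lambda(j)|$, whence $|\V_\lambda(i)| = |\V_\lambda(j)|$. Since $G'$ is a connected network and $\lambda$ is surjective (so $\lambda(\V) = \V(G')$), following a path in $G'$ between any two vertices shows that all fibres have one common size $l \in \setN$.

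The step I expect to be the crux is justifying the local bijectivity of $\lambda$ that both parts~2 and~3 rely on: that $\lambda$ restricts to a \emph{bijection} of $\NH^G(v)$ onto $\NH^{G'}(\lambda(v))$, not merely a map into it with matching cardinality. I would take injectivity on each closed neighbourhood as the intended content of step~2 of the lift verifier --- it is consistent with, and indeed needed for, the stated intuition that every node of the lift occurs equally often in a lifted configuration --- since without it the constant-fibre conclusion of part~3 can genuinely fail. Once this bijectivity is fixed, all three claims are routine, with part~3 being the only one requiring a genuine argument beyond unwinding definitions.
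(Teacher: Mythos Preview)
Your proposal is correct and follows essentially the same route as the paper: parts~1 and~2 are read off directly from the verifier's checks (the paper simply declares them obvious and omits the proofs), and part~3 is obtained by showing that $G'$-adjacent fibres have equal size and then propagating this along paths using connectivity of $G'$. The paper packages the adjacent-fibre step as an explicit bijection $\varphi_{i,j}\colon \V(i)\to\V(j)$ sending each $v$ to its unique neighbour with label $j$, which is exactly your edge double-count in different clothing; you are in fact more careful than the paper in isolating the local injectivity of $\lambda$ on closed neighbourhoods that underlies both arguments, a point the paper dispatches with ``obviously, only one node fulfills this property.''
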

\begin{proof}
  We omit the obvious proofs of the first claims. In order to conclude Claim 3, we  show the following by induction on the length of the  path.
 
 $\bullet$ If there is a path from $i$ to $j$ in $G'$, then  $|\V(i)| = |\V(j)|$ holds.
  
  \noindent\textbf{Induction Basis:} Let $i,j$ be neighbors in $G'$. Let $v\in \V(i)$ be a node of $G$. Then step 3 of the verifier implies that there is $u \in \NH(v)$ such that $\lambda(u) = j$ holds. Obviously, only one node fulfills this property. By the symmetry of this argument, the map $\varphi_{i,j}: \V(i)  \rightarrow  \V(j)$ defined by  $\varphi_{i,j}(v)= u \in N(v), \lambda(u) = j$
  is well defined and bijective. Thus, we have $|\V(j)| = |\V(i)|$.
     
  \noindent\textbf{Induction Step:}
   Let $i,j$ be connected by a shortest path of length $k+1$ in $G'$. Let $i,i_1,\ldots,i_k,j$ denote a shortest path between them. Then the induction hypothesis implies
   $ |\V(i)| = |\V(i_1)| = \ldots = |\V(i_k)|$
   and 
   $ |\V(i_1)| = \ldots = |\V(i_k)| = |\V(j)|.$ 
 Thus, we have $|\V(i)| = |\V(j)|$.

 Now the claim follows from the fact that $G'$ is a simple, connected graph, and, thus, there is a path between every pair of nodes of $G'$.  All sets $\V(i)$ have the same cardinality and we can define $l = |\V(i)|$.
\end{proof}
After we have seen the properties of the lifted configuration according to the $i$-labeled sets, we will focus on the relationship of the inputs $w$ and $w'$. This properties will be of interest when we will use the notation of lifted configurations to show the containment of some languages  to the class $\NLD$.
The following lemma will help us to prove claims on the  topological structure of $G$ and $G'$.
\begin{lemma}\label{lem:lift:map}
Let $G$ be a lifted configuration of $G'$. Then there exists a $k\in \setN$ such that for every $i \in \lambda(\V)$ there exists a map $\varphi_{i}:\V\rightarrow \{1,\ldots, k\}$ such that 
 for each $ j,l \in \{1,\ldots, k\}$  $|\varphi_{i}^{-1}(\{j\})| = |\varphi_{i}^{-1}(\{l\})|$ holds and  $G[\varphi_{i}^{-1}(\{j\})]$ is connected.
  Moreover, for each $j \in \{1,\ldots, k\} $ $\Id(\lambda(G[\varphi_{i}^{-1}(\{j\})])) = \lambda(\V)$ holds.
 \end{lemma}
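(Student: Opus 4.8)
Looking at Lemma \ref{lem:lift:map}, I need to prove that for a lifted configuration $G$ of $G'$, there's a uniform constant $k$ (which should be $l$ from Lemma \ref{satz:lift}, Claim 3) such that each label class admits a partition into $k$ connected pieces of equal size, each piece mapping surjectively onto $\lambda(\V)$ under $\lambda$.

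\textbf{Proof plan.} The plan is to build the partition maps $\varphi_i$ by tracking, for a fixed $i \in \lambda(\V)$, how the nodes of $\V_\lambda(i)$ connect to the rest of $G$ through the covering-like structure provided by $\lambda$. By Lemma \ref{satz:lift}, Claim 3, there is a constant $l$ with $|\V_\lambda(j)| = l$ for every $j \in \lambda(\V)$; I will show $k = l$ works. The key observation is the bijection $\varphi_{i,j}: \V_\lambda(i) \to \V_\lambda(j)$ constructed in the proof of Lemma \ref{satz:lift} for adjacent $i,j$ in $G'$: since $G$ is a lifted configuration, every $v \in \V_\lambda(i)$ has, for each $G'$-neighbor $j$ of $i$, exactly one $G$-neighbor $u$ with $\lambda(u) = j$ (step 2 of the lift verifier forces $\lambda$ to respect neighborhoods, and the degree condition $|\NH^G(v)| = |\NH^{G'}(\lambda(v))|$ forces "exactly one"). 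Composing these elementary bijections along paths in $G'$ gives, for each $w \in \V$, a well-defined bijection from $\V_\lambda(i)$ to $\V_\lambda(\lambda(w))$; fixing an enumeration $\V_\lambda(i) = \{v_1, \dots, v_l\}$, this lets me assign to every node $w$ of $G$ an index $\varphi_i(w) \in \{1,\dots,l\}$, namely the index of the preimage of $w$ under the bijection $\V_\lambda(i) \to \V_\lambda(\lambda(w))$.

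First I would verify $\varphi_i$ is well-defined: although the composed bijection could a priori depend on the chosen $G'$-path from $i$ to $\lambda(w)$, I claim it does not. This is the covering-space-type argument: it suffices to check that going around any closed walk in $G'$ based at $i$ induces the identity on $\V_\lambda(i)$, and since $G$ itself is connected, any such monodromy permutation, applied via the lifted structure, must fix each node (one lifts the closed walk to a walk in $G$ and uses that $\lambda$ is a graph homomorphism with the single-preimage-per-neighbor property, so the lift of a closed walk is closed). Alternatively—and this may be cleaner to write—I would avoid the monodromy issue by directly defining $\varphi_i^{-1}(\{m\})$ as the connected component of $G$ containing $v_m$ in the subgraph obtained after a suitable construction, then showing these components partition $V$, have size $|\lambda(\V)|$ each (by Claim 3's path argument, one node per label), and that $\lambda$ restricted to each is onto $\lambda(\V)$. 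Connectivity of $G[\varphi_i^{-1}(\{m\})]$ and the equal-cardinality condition then follow because each such piece is, essentially, a lift of a spanning connected subgraph of $G'$ through a section of $\lambda$.

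The main obstacle I anticipate is precisely the well-definedness / path-independence of the maps $\varphi_{i}$ — i.e., showing the monodromy is trivial. In a genuine graph covering this would be automatic only if the cover is trivial, so the argument must genuinely use that $G$ is connected and that the lift verifier's conditions make $\lambda$ behave like a covering map locally while $l = |\V_\lambda(i)|$ is globally constant. I would handle this by: (1) showing each $\V_\lambda(i)$-node has exactly one $G$-neighbor over each $G'$-neighbor of $i$; (2) concluding that the union of the $l$ "sheets" is all of $V$ and the sheets are pairwise disjoint and each spans a connected subgraph isomorphic to $G'$ (or a quotient thereof) via $\lambda$; (3) defining $\varphi_i$ to send $w$ to the index of its sheet. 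Steps (2)–(3) sidestep the need to compose bijections along paths explicitly. The conditions $|\varphi_i^{-1}(\{j\})| = |\varphi_i^{-1}(\{l\})|$ and $\Id(\lambda(G[\varphi_i^{-1}(\{j\})])) = \lambda(\V)$ are then immediate from each sheet containing exactly one node per label in $\lambda(\V)$. I expect the remaining verifications (connectivity of sheets, disjointness, covering all of $V$) to be routine once the sheet decomposition is set up correctly.
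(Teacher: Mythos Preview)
Your central claim---that the monodromy around any closed walk in $G'$ is trivial---is false, and this is the real obstacle the lemma has to overcome. Take $G'=C_3$ with vertices $1,2,3$ and $G=C_6$ with vertices $v_1,\dots,v_6$ in cyclic order, $\lambda(v_r)=r\bmod 3$. This is a lifted configuration in the paper's sense (each $v_r$ has degree $2=|\NH^{G'}(\lambda(v_r))|$ and its two $G$-neighbours land on the two $G'$-neighbours of $\lambda(v_r)$). Lift the closed walk $1\to 2\to 3\to 1$ starting at $v_1$: you get $v_1\to v_2\to v_3\to v_4$, which is \emph{not} closed. So the composed bijection $\V_\lambda(1)\to\V_\lambda(1)$ swaps $v_1$ and $v_4$. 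Your sentence ``the lift of a closed walk is closed'' is exactly what fails for nontrivial covers; connectivity of $G$ does not help here. Consequently your map $\varphi_i$ is not well-defined via arbitrary path composition, and the later claim that each sheet ``spans a connected subgraph isomorphic to $G'$'' is likewise wrong (in the example the sheets are $3$-paths, not triangles).

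Your fallback (``the connected component of $G$ containing $v_m$ in the subgraph obtained after a suitable construction'') points in the right direction but is not a proof until you name the construction. The paper's proof supplies precisely this missing ingredient: it fixes, in effect, a spanning tree of $G'$ by putting edge weights so that all shortest paths from $i$ are unique, and then grows the sets $A_j^m$ around each $v_j\in\V_\lambda(i)$ greedily, at each step adding the neighbour whose $\lambda$-label is next in the shortest-path order. Because one is implicitly walking along a tree in $G'$, the path from $i$ to any label is unique, so there is no monodromy ambiguity, and the lift property guarantees a unique node is added to each $A_j^m$ at every step. That spanning-tree idea is what you need to replace your incorrect monodromy argument.
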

 
 \begin{proof}
 We define $k = |\V(i)|$ which is,  by Proposition \ref{satz:lift}, indepent of $i$. We can assume that $k > 1$ holds.  Define weights on the edges of $G'$ such that all shortest paths from $i$ to every other node have unique lengths. Let $v_{1},\ldots,v_{k}$ be the nodes of $\V(i)$. Moreover, let $A^{m}_{j}$ denote the set of all nodes which have been assigned to partition $j$ in the $m$-th inductive step, where the sets are inductively defined as follows: we define $A^{1}_{j} = \{v_{j}\}$. Moreover, we define $A^{l+1}_{j} = A^{l}_{j} \cup \{v^{j}\}$, where $v^{j} \in \NH(A^{l}_{j})\setminus A^{l}_{j}$ is minimal with $\dist_{G'}(\lambda(v_{j}), \lambda(v^{j}))$. We show by induction that the defined sets fulfill the desired properties.
  
  \noindent\textbf{Induction Basis:}
  All sets $A^{1}_{j}$ have size one and consist of one node from the set $\V(i)$. Thus, all $k$ sets are disjoint and have the same cardinality. Since all $G[A^{1}_{j}]$ consist of only one node, the vertex induced graphs are connected. Furthermore, for all $ j \in \{1,\ldots,k\}$ it holds  $\lambda(\Id(G[A^{1}_{j}]) ) = \{i\}$. Hence, we have  $\lambda(\Id(G[A^{1}_{j}]) )= \lambda(\Id(G[A^{1}_{h}]) )$.
  
  \noindent\textbf{Induction Step:} Since every shortest path with the edge weights from $i$ to any other node is  unique  and $(G,w)$ is the lifted configuration of $(G',w')$, all nodes that have been chosen in step $l+1$, must have the same label. The induction hypothesis implies that  $\lambda(\Id(G[A^{l}_{j}]) )= \lambda(\Id(G[A^{l}_{h}]) )$ holds.  $\lambda(v^{j}) = \lambda(v^{h})$ implies that $\lambda(\Id(G[A^{l+1}_{j}]) )= \lambda(\Id(G[A^{l+1}_{h}]) )$ holds. Moreover, we have $|A^{l+1}_{j}| = |A^{l}_{j}| + 1$ as long as not all nodes are assigned. The induction hypothesis implies $A^{l}_{j} = A^l_i$  and, thus,  $A^{l+1}_{j}=A^{l+1}_{i}$holds.
  
  The induction hypothesis implies that all $G[A^{l+1}_{j}]$ are connected because all $G[A^{l}_{j}]$ are connected and the node $v^{j}$ is taken from the neighborhood of $A^{l}_{j}$ in $G$. Since  the cardinality of each set increases by one in an induction step, all nodes are assigned after step $m=\frac{|\V|}{k}$.
  Now we can define the map $\varphi_{i}: \V \rightarrow \{1,\ldots, k\}$ by
   $ \varphi_{i}(v) = j \text{ , if } v \in A^{m}_{j}.$
The map is well defined because all sets are disjoint. $\varphi_{i}^{-1}(\{j\}) = A^{m}_{j}$ implies that $\varphi_{i}$ obviously fulfills all properties.    
 \end{proof}
 
 Since the lift can be embedded in the lifted configuration such that each node of the lift occurs equally  often in the lifted configuration, there must be a cyclic arrangement. This is stated by the following lemma.
\begin{lemma}\label{satz:lift_cycle}
Let $(G',w')$ be a lift of $(G',w')$ such that $|\V(i)| >0$ holds. Then $G$ and $G'$ have  cycles. 
\end{lemma}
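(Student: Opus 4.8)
The statement to prove is Lemma~\ref{satz:lift_cycle}: if $(G,w)$ is a lifted configuration of $(G',w')$ with some label class $\V(i)$ satisfying $|\V(i)|>0$ (and, implicitly, $|\V(i)|>1$, i.e., $\lambda$ is not an isomorphism), then both $G$ and $G'$ contain a cycle. The plan is to exploit the ``equal multiplicity'' structure already extracted in Lemma~\ref{satz:lift} (all fibers $\V(j)$ have the same cardinality $l=|\V(i)|$) together with the local-isomorphism property (Claim~2 of Lemma~\ref{satz:lift}: $\NH^{G,w}(v)\cong\NH^{G',w'}(\lambda(v))$, hence $\deg_G(v)=\deg_{G'}(\lambda(v))$). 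The key observation is that $\lambda$ is a graph covering map of finite degree $l\ge 2$, and a finite graph admitting a nontrivial finite covering cannot be a tree; equivalently, if $G'$ were a tree then the covering would be trivial.

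First I would argue that $G'$ must contain a cycle. Suppose for contradiction $G'$ is a forest, hence (being connected) a tree. Pick any node $v\in\V(G)$ and consider the restriction of $\lambda$ to $\NH^G(v)$: by Lemma~\ref{satz:lift} it is an isomorphism onto $\NH^{G'}(\lambda(v))$, so $\lambda$ is locally injective. Now build a path in $G$: start at an arbitrary $v_0$, and since $\deg_G(v_0)=\deg_{G'}(\lambda(v_0))$ and locally $\lambda$ matches neighbors bijectively, one can lift any walk in $G'$ starting at $\lambda(v_0)$ to a walk in $G$ starting at $v_0$, and distinct continuations of a walk in the tree $G'$ lift to distinct continuations in $G$ (local injectivity forbids immediate backtracking from being the only option and forbids a vertex being its own neighbor's image in a degenerate way). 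Because $G'$ is a finite tree and $l\ge 2$, there exist two distinct nodes $u_1,u_2\in\V(i)$; I would connect $v$ to both via lifts of paths in $G'$ and trace the resulting closed walk in $G$ back down through $\lambda$, using that in a tree any closed walk is reducible — but its lift need not be reducible in $G$ unless $G$ is simply connected. The clean way to finish: a connected finite graph has a nontrivial covering iff it has a cycle (its fundamental group is free of rank $|E|-|V|+1$, nontrivial iff there is a cycle); since $\lambda:G\to G'$ is a covering of degree $l\ge 2$, $G'$ has a cycle, and then its preimage under the covering $\lambda$ yields a cycle in $G$ as well (a covering of a graph with a cycle has a cycle, since the covering is locally bijective and $G$ has $\ge |V(G')|$ edges mapping onto each edge-orbit).

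For the $G$ side, once $G'$ has a cycle $C'=(i_0,i_1,\dots,i_r=i_0)$, I would lift it: starting from any $v_0\in\V_\lambda(i_0)$, repeatedly use local bijectivity of $\lambda$ to choose $v_{s+1}\in\NH^G(v_s)$ with $\lambda(v_{s+1})=i_{s+1}$; this produces a closed walk in $G$ projecting onto $C'$. If this closed walk is not already a simple cycle, it still contains one, because a closed walk of positive length in a simple graph that is not contractible-by-backtracking contains a cycle — and here backtracking is impossible at each step since consecutive labels $i_s,i_{s+1},i_{s+2}$ on a simple cycle $C'$ are pairwise distinct, so $v_{s+2}\ne v_s$. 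Hence $G$ contains a cycle.

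The main obstacle I anticipate is the bookkeeping needed to make ``$\lambda$ is a covering map'' rigorous purely from the verifier-based definition of lifted configuration — in particular, verifying that the local isomorphism $\NH^G(v)\cong\NH^{G'}(\lambda(v))$ really gives a \emph{bijection} between the edge-sets at $v$ and at $\lambda(v)$ (not merely $\lambda(\NH^G(v))\subseteq\NH^{G'}(\lambda(v))$ with matching degrees, which the verifier checks), and that the multiplicity $l$ being $\ge 2$ is forced by the hypothesis that $G\ne G'$ (otherwise the statement is false for $l=1$, e.g.\ a tree is its own lift). I would pin this down using Claim~2 of Lemma~\ref{satz:lift} for the edge-bijection, and I would read the hypothesis ``$|\V(i)|>0$'' in the intended sense that $\lambda$ is a proper (non-injective) surjection so that by Lemma~\ref{satz:lift} every fiber has the common size $l\ge 2$; with that, the covering-space argument goes through and both cycles are produced. (A remark clarifying the $l\ge2$ reading should probably be inserted into the statement.)
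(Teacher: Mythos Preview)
Your approach is correct and genuinely different from the paper's. The paper does not invoke covering-space language at all; instead it uses the partition map $\varphi_i$ of Lemma~\ref{lem:lift:map} to cut $G$ into $k=|\V(i)|$ connected blocks $P(v_1),\dots,P(v_k)$, each carrying the full label set $\lambda(\V)$. Since $G$ is connected and $k\ge 2$, some crossing edge joins two blocks; the paper then traces the label-pair $(l,m)$ of that edge through successive blocks until pigeonhole forces a repeated block, producing a cycle in $G$. (The written proof in the paper actually stops after exhibiting a cycle in $G$ and never explicitly argues the $G'$ half of the statement; your approach handles both.)

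Your route---observe that $\lambda$ is a covering map of degree $l\ge 2$, conclude $G'$ cannot be a tree, then lift a cycle---is shorter and more conceptual, but imports the fact that trees admit no nontrivial covers. The paper's argument is longer but entirely self-contained given the partition lemma. A clean elementary replacement for your fundamental-group step: if $G'$ were a tree then $|E(G)|=l\,|E(G')|=l(|V(G')|-1)=|V(G)|-l<|V(G)|-1$, contradicting connectedness of $G$; hence $G'$ has a cycle, and the same count $|E(G)|=l\,|E(G')|\ge l\,|V(G')|=|V(G)|$ shows $G$ is not a tree either.

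One small gap in your lifting paragraph: the lift of $C'$ starting at $v_0$ need not close after a single traversal---after $r$ steps you land in $\V_\lambda(i_0)$, not necessarily at $v_0$ itself. You need pigeonhole on the finite fiber to close the walk after at most $l$ laps (or simply bypass the issue via the edge count above). With that fix your argument is complete. Your reading of the hypothesis as $l\ge 2$ is also the intended one; for $l=1$ the claim is plainly false.
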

\begin{proof}
Fix an $i \in \lambda(\V)$. Let $\V(i) =\{v_{1},\ldots, v_{k}\}$ be the $i$-th node set.  Moreover, let $\varphi_{i}$ be a map like in Lemma \ref{lem:lift:map}. We define $P(v_{j}) = G[\lambda_{i}^{-1}(\{j\})]$ as the partition of $G$ with the nodes assigned to $v_{j}$. Since $G$ is connected, there is an edge between $P(v_{e})$ and $P(v_{j})$ for some $e,j \in \{1,\ldots,k\}$. Let this edge be denoted by $\{v'_{e},v'_{j}\}$ with $v'_{e} \in P(v_{e}), v'_{j} \in P(v_{j})$ and the labels $l = \lambda(v'_{j})$ and $m=\lambda(v'_{e})$.  Lemma \ref{lem:lift:map} implies $\lambda(P(v_{j})) = \lambda(V)$, and, thus, there is a node $v''_{j} \in P(v_{j})$ with label $m$. Since $G$ is a lift of $G'$, there is a neighbor $v''$ of $v''_{j}$ with label $l$. This node cannot belong to $P(v_{j})$. Thus, there are two cases: the node $v''$ belongs to $P(v_{e})$ or to some other $P(v_{j_{2}})$. If $v''$ belongs to $P(v_{e})$, we have a cycle in the graph $G$ because all partitions are connected. 
If $v''$ belongs to a $P(v_{j_{2}})$, we can apply the argument inductively and get a sequence $j,j_{2},\ldots, j_{s}$ of partitions such that $P(v_{j_{r}})$ is connected to $P(v_{j_{r+1}})$ through an edge with the label $l$ and $m$. 
 Since there are only finite partitions, there must be a back edge. Thus, we have found a cycle in $G$.
\end{proof}
%This implies a special relationship for lifts and their lifted configurations,  if one of them is a tree.  
Now, we are ready to turn to  the proof of Lemma \ref{satz:insideNLD:FixedPointFreeSymmetryOnTrees}.

\begin{proofof}{Lemma \ref{satz:insideNLD:FixedPointFreeSymmetryOnTrees}}
Consider an arbitrary decidable language $L \subseteq Trees$. For $(T,\epsilon)$ 
the certificate $c(v)=(T',\lambda(v))$ consists of an encoded tree $T'$ and the label $\lambda(v)$.  It has size $\bO(n)$ because the label has size $\bO(\log n)$ and the tree  is encoded with $2n$ bits. 
 We use the following verifier: Each node $v$ performs the following:
   \begin{enumerate}
       \item Reject if the certificate is not  the encoding of a tree $T'$  and  if $(T,\epsilon)$ is not a lift of $(T,\epsilon)$.
       \item Check  $(T',\epsilon) \in %\FixedPointFreeSymmetryOnTrees$ 
       L$ by brute force.
   \end{enumerate}
   
   The first step makes sure that only certificates are accepted which encode a tree $T'$ and that there are lifts of the input certificate. By  Lemma \ref{satz:lift_cycle}, the algorithm accepts if and only if  $T$ is a tree and $T = T'$ holds.
     Thus, in step 2, it is verified that $(T,\epsilon)$ belongs to $L$. Therefore,  the verifier is correct.
\end{proofof}
%\textbf{Remark:} Each decidable $L\subseteq \Tree$ (without inputs) belongs to the class $\NLD(\bO(n))$.

%%%%%%%%%%%%%%%%%%%%%%%%%%%%%%%%%%%
\subsection{Verifying Equally Sized Partitions with Quadratic Size Certificates}

 In order to prove the upper bound for the equally sized partitions, we need to further study the relationship between lifts and lifted configurations
\begin{lemma}\label{folgerung:lift}
 Let $(G,\w)$ be a lifted configuration of $(G',\w')$. Then there exists $l \in \setN$ such that we have:
 \begin{enumerate}
   \item $|\V| = l \cdot |\V'|$.
   \item For each input $\tilde{\w} \in \Inp( (G,w) )$ we have $|\{ v \in \V| w(v) = \tilde{\w}\}| = l\cdot |\{ v' \in \V'| w'(v') = \tilde{\w}\}|. $
 \end{enumerate}
\end{lemma}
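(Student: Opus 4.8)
The plan is to derive Lemma \ref{folgerung:lift} directly from Lemma \ref{satz:lift}, which already records the essential structural facts about the fibers $\V_\lambda(i) = \{v \in \V \mid \lambda(v) = i\}$. First I would use that, since $(G,w)$ is a lifted configuration of $(G',w')$, the map $\lambda \colon \V \to \V(G')$ is surjective; hence $\lambda(\V) = \V(G')$ and the sets $\V_\lambda(i)$, $i \in \V(G')$, form a partition of $\V$. By Claim 3 of Lemma \ref{satz:lift} there is an $l \in \setN$ with $|\V_\lambda(i)| = l$ for every $i \in \V(G')$. Summing over this partition gives $|\V| = \sum_{i \in \V(G')} |\V_\lambda(i)| = l \cdot |\V(G')|$, which is Claim 1.

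For Claim 2, fix $\tilde{\w} \in \Inp((G,w))$. The set $\{v \in \V \mid w(v) = \tilde{\w}\}$ decomposes along the same partition: it is the disjoint union of the sets $\{v \in \V_\lambda(i) \mid w(v) = \tilde{\w}\}$ over $i \in \V(G')$. By Claim 1 of Lemma \ref{satz:lift}, $w(v) = w'(i)$ holds for every $v \in \V_\lambda(i)$, so each of these sets equals $\V_\lambda(i)$ when $w'(i) = \tilde{\w}$ and is empty otherwise. Using once more that every fiber has size $l$, this yields $|\{v \in \V \mid w(v) = \tilde{\w}\}| = \sum_{i \colon w'(i) = \tilde{\w}} |\V_\lambda(i)| = l \cdot |\{i \in \V(G') \mid w'(i) = \tilde{\w}\}|$, i.e. Claim 2, with the same constant $l$ as before.

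There is essentially no hard step here: the lemma is a bookkeeping corollary of Lemma \ref{satz:lift}. The only points that require a little care are (i) invoking surjectivity of $\lambda$ so that the fibers genuinely partition $\V$ (making the two decompositions above exhaustive rather than merely disjoint), and (ii) observing that the \emph{same} constant $l$ serves both claims, which is precisely the content of Claim 3 of Lemma \ref{satz:lift}. Both are already established, so the proof is a short paragraph.
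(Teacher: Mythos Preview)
Your proposal is correct and follows essentially the same argument as the paper: both proofs use the fiber partition $\V = \bigcup_i \V_\lambda(i)$ together with Claims~1 and~3 of Lemma~\ref{satz:lift} to count first all of $\V$ and then the nodes with a fixed input value. Your explicit mention of the surjectivity of $\lambda$ is exactly what the paper uses implicitly when it identifies the label set $I$ with $\V'$.
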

\begin{proof}
 By Proposition \ref{satz:lift}, all $\V(i)$ have the same cardinality $l$. We denote the set of labels of $G'$ with $I$. Obviously, we have  $\V(i)\cap\V(j)=\emptyset$ for $i\neq j$. Since every $v\in \V$ is an element of a $\V(j)$, we have 
 \begin{equation*}
   |\V|  = |\bigcup_{i \in I} \V(i)| = \sum_{i\in I} |\V(i)| = \sum_{i\in I} l = l \cdot \sum_{v' \in V'} 1 = l \cdot |\V'| .
 \end{equation*} 
For the second claim, let $\tilde{w} \in \Inp((G,w))$ be an input of $(G,w)$. Then we have
\begin{equation*}
|\{ v \in \V| w(v) = \tilde{\w}\}| = |\bigcup_{i \in I} (\{ v \in \V| w(v) = \tilde{\w}\} \cap \V(i))| = \sum_{i\in I} |\{ v \in \V| w(v) = \tilde{\w}\} \cap \V(i))| .\end{equation*}
By Proposition \ref{satz:lift}, this value is  equal to
$
\sum_{i\in I \atop w'(i)= \tilde{w}} l = l \cdot |\{ v' \in \V'| w'(v') = \tilde{\w}\}|.
$
This proves the claim.
\end{proof}
%Now we can conclude Proposition \ref{satz:insideNLD:EqualSizePartition}.

\begin{proofof}{Lemma \ref{satz:insideNLD:EqualSizePartition}}
Let $(G,w)$ be the input. We use the lift certificate $((G',w'),\lambda(v))$. The certificate size is $\bO(n^{2})$.  The verifier works analogously to the one in the last proposition.
Again, the first step  ensures that  $(G',w')$ is  the lift of $(G,w)$.
If $(G',w')\in\EqualSizePartition$ holds, $w'$ defines the equal sized partitions $P'_{i} = \{v' | \w(v') = i\} \subseteq \V'$. By Corollary \ref{folgerung:lift}, all nodes of $\V(i)$ belong to the same partition. We define
$P_{0} = \bigcup_{i\in P'_{0}} \V(i)$ and $P_{1} = \bigcup_{j\in P'_{1}} \V(j).$
 $|\V(i)| = |\V(j)|=k$ implies the equal size of the partitions:
\begin{equation*}
 \begin{matrix}
 |P_{0}|  =  |\bigcup_{i\in P'_{0}} \V(i) | 
  = \sum_{i \in P'_{0}} |\V(i)| &
  =  k\cdot |P'_{0}|
 =  k\cdot |P'_{1}| 
 = \sum_{j \in P'_{1}} |\V(j)| &
 = |\bigcup_{j\in P'_{1}} \V(j) |
 = |P_{1}|\,.
 \end{matrix}
\end{equation*}
Since $\V(j) \cap \V(i) = \emptyset$ holds for $i\neq j$, the partitions are disjoint.
Thus,  the verifier works correctly.
\end{proofof}

%%%%%%
%%%%%%%%%%%%%%%%%%%%%%%%%%%%%%%%%%%%%%%%%%%%%%%%%
%%%%%%

\section{The Lower Bounds}\label{sec:nldhierarchy}

\subsection{A Logarithmic Lower Bound for the Certificate Size of Trees}\label{sec:log}
In this section, we prove a $\Omega(\log n)$ lower bound for verifying that a graph is a tree. The same technique can be used to prove logarithmic lower bounds for several other languages, c.f. the appendix. 

\begin{proofof}{Lemma \ref{thm:lowerbound:fixedpointfreesymmetryontrees}}
We count the possible labelings of $t$-neighborhoods with certificates, where the  $t$-neighborhoods  look like a path. In this case,  we denote the $t$-neighborhood of a node $v$ as $(a,b)_{v}$, where $a = (a_{1},\ldots, a_{t})$ and $b = (b_{1},\ldots, b_{t})$  are a path $a_{1},\ldots,a_{t},v,b_{1},\ldots,b_{t}$. Whenever we have a certificate $c$ for the nodes, we let $c(a,b)_{v} = (c(a_{1}), \ldots,c(a_{t}),c(b_{1}),\ldots, c(b_{t}))$ denote the vector of the certificates assigned to the nodes.   
We will show that we can find two nodes $u\neq v$ with $c(a,b)_{v}=c(a',b')_{u}$. Then we will create a new configuration which does not belong to the language and with  isomorphic $t$-neighborhoods.
 
Let $V$ be a local verifier that verifies  $\Tree$ in $t$ rounds. For the sake of contradiction, we assume  that $V$ needs certificates of size at most $k(n) = \frac{\frac{1}{2} \log (n) -(4t+4 +1)}{2t} - 1$. Then there are at most $2^{k(n)+1}$ different certificates which can be assigned to a node. We consider a path $P$ of size $ s(n) = (4t+4) \left((2^{k(n)+1})^{2t} + 1\right)2^{k(n) + 1}$ and allow certificates of size $k(n)$. A simple calculation shows that $s(n) < n$ holds.
Therefore, we allow the verifier to use larger certificates  than assumed. Thus, it is sufficient to come up with a  contradiction against the correctness of the verifier that uses these certificate sizes. 
We have $(P,\epsilon) \in \Tree$. Let $c$ be a certificate which leads to an accepting computation. Then let $\tilde{c}$ denote the most assigned certificate to the nodes. We define the set $ V_{\tilde{c}}^{c} = \{v \in \V\,|\, c(v) = \tilde{c}\}.$
By the definition of $s(n)$, we have $|\V_{\tilde{c}}^{c}| \geq (4t+4)\left({2^{k(n)+1}}^{2t} + 1\right)$ because there are at most $2^{k(n) + 1}$ different certificates. Thus, there must be at least $4t+4$ nodes $u,v\in  V_{\tilde{c}}^{c}$ with $c(a,b)_{u} = c(a',b')_{v}$ because there are at most ${2^{k(n)+1}}^{2t}$ different possibilities to label a $t$-neighborhood $(\tilde{a},\tilde{b})_{u'}$ of a node $u$. The factor $4t+4$ ensures that we can choose $u,v$ such that $(a,b)_{u}$ and $(a',b')_{v}$ do not share a node and that none of the nodes is an end node.

To come up with a contradiction, we can construct a new graph which does not belong to the language. Therefore, let $a,v,b,d,a',u,b'$ denote the subgraph of $P$ that starts at  the $t$-neighborhood of $u$ over the nodes $d=(v_{j_{1}},\ldots,v_{j_{l}})$ to the neighborhood of $v$, where we have $|a| = |b|=|a'|=|b'| = t$ and $l = |d|\geq0$, c.f. Figure \ref{fig:lowerbound:tree:cycleconstruction}. Then we can define a new graph $G$ which contains a cycle as follows:
 let $(a,b)_{v} = (a_{1},\ldots,a_{t},b_{1},\ldots,b_{t})$ and $(a,b)_{u} = (a'_{1},\ldots,a'_{t}, b'_{1}\ldots,b'_{t})$ denote the nodes in the $t$-neighborhood. Then let $d'$ be a copy of $d$. The graph consists of the paths $(a,b)_{v}$, $(a',b')_{u}$, $d$ and $d'$. Moreover, we add the edges $(b_{t},v_{j_{1}})$, $(v_{j_{l}},a'_{1})$, $(b'_{t},v'_{j_{1}})$ and $(v'_{j_{l}}, a_{1})$. The construction is visualized in Figure \ref{fig:lowerbound:tree:cycleconstruction}. Every node inherits the certificate from the corresponding node of $P$ with certificate $c$. Let $c'$ denote the new certificate for $G$. Then we have the following isomorphic $t$-neighborhoods for $ i\in\{1,\ldots,l\}$:
 $
  \NH^{(G,\epsilon),c'}(u)\cong \NH^{(P,\epsilon),c}(u), \NH^{(G,\epsilon),c'}(u)\cong \NH^{(P,\epsilon),c}(u), 
  \NH^{(G,\epsilon),c'}(a_{i}) \cong \NH^{(G,\epsilon),c'}(a'_{i}) \cong \NH^{(P,\epsilon),c}(a_{i}) 
  \NH^{(G,\epsilon),c'}(b_{i}) \cong \NH^{(G,\epsilon),c'}(b'_{i}) \cong \NH^{(P,\epsilon),c}(b_{i}) ,
  \NH^{(G,\epsilon),c'}(d_{i}) \cong \NH^{(G,\epsilon),c'}(d'_{i}) \cong \NH^{(P,\epsilon),c}(d_{i})  .
 $ 
 Thus, all nodes must  accept the input $(G,\epsilon)\notin \Tree$. This is a contradiction and, thus, certificates of size $k(n)$ are not sufficient. 
\end{proofof}

\subsection{A Quadratic Lower Bound for the Certificate Size of Equal Sized Partitions}\label{sec:super}

 We reconsider  the language of all graphs with an equal sized partition and study the family of connected unlabelled  graphs $\setG_{n}$ of size $n$. Since almost all graphs are connected \cite{west2001introduction}, for sufficient large $n$ we have
$ |\setG_{n}| \geq \frac{1}{2} 2^{{n \choose 2} }/n! \text{ and } \log( 2^{{n \choose 2} }/n!) = \Theta(n^{2}),$
where  $\frac{1}{n!}$ accounts for the removal of the isomorphic copies.  We have to take into account the isomorphic copies because these are only graphs with different identifier assignments. 

Now we will introduce the construction which will be the key for the proof.  We build  a graph from two graphs of  $\setG_{n}$ in the following way: let $G',G'' \in \setG_{n}$ be two graphs with $n$ nodes. Moreover, we fix one node $v' \in\V(G')$ and one node $v''\in \V(G'')$. Then the graph $G_{t}(G',v',i,G'',v'',j)$, $i,j\in\{0,1\}$, consists of the graphs $G'$, $G''$ and $P(t)$, where $P(t)$ denotes the path $v',v_{1},\ldots,v_{4t+4},v''$. The inputs are defined by: 
$
  \forall v\in\V(G')\, w(v) = i , 
  \forall v\in\V(G'') \,w(v) = j, 
  \forall i =1,\ldots, 4t+4\,w(v_{i}) = i \mod 2.
  $
  
  An example can be found in Figure \ref{fig:lowerbound:PartionGraph}. An important and obvious observation is that $G_{t}(G',v',i,G'',v'',j)$ is an element of $\EqualSizePartition$ if and only if $|\V(G')| = |\V(G'')|$ and $i\neq j$ hold. 

\begin{proofof}{Lemma \ref{thm:lowerbound:eqsizepartition}}
 For the sake of contradiction, we assume that there is a local verifier  that verifies the language in $t\in\setN$ rounds. Let $n\geq 4t+4$ be a positive integer  and define $g(n) = \log(2^{{n \choose 2} }/n!)$. We will show that the algorithm needs certificates of size at least $ \frac{g(n)}{2\cdot 10 \cdot(2t+2)}$. Therefore, we assume that the algorithm needs  certificates of size less than $ \frac{g(n)}{2\cdot 10 \cdot(2t+2)}$  per node. For all $G',G''Ê\in \setG_{n}$, we have $G_{t}(G',v',0,G'',v'',1)$, $G_{t}(G',v',1,G'',v'',0)  \in \EqualSizePartition$ and $G_{t}(G',v',0,G'',v'',0) \notin \EqualSizePartition$. The graphs have size $2n + 4t+4$. For sufficient large $n$ we have
$\frac{g(2n + 4t + 4)}{2\cdot 10 \cdot(2t+2)}\leq \frac{g(n)}{2  \cdot(2t+2)}$.
Thus, the nodes $v_{t+2},\ldots, v_{3t+3}$ of $P(t)$ have at most $\frac{g(n)}{2}$ bits and the nodes in the $t$-neighborhoods of these vertices have the same inputs in all instances of the graphs. Since this is  smaller than $\log \frac{|\setG_{n}|}{n!}$,  there are at most $|\setG_{n}|$ possible different certificates  that are used for the $2t+2$ nodes. Therefore, we have graphs $G',G'',G''',G'''' \in \setG_{n}$ and certificates $c_{1}$, $c_{2}$ such that the verifier accepts  $G_{t}(G',v',0,G'',v'',1)$ with certificate $c_{1}$ and that  the verifier accepts  $G_{t}(G''',v''',1,G'''',v'''',0)$ with certificate $c_{2}$. The certificates $c_{1},c_{2}$ assign the same certificates to the nodes of $P(t)$. 

In contrast, we can look at the graph $G_{c} = G_{t}(G',v',0,G'''',v'''',0)$ and define the certificate $c$ by: $
 \forall v \in \V(G')  c(v) = c_{1}(v),
 \forall i\in\{1,\ldots,2t+2\} c(v_{i}) = c_{1}(v),
 \forall i\in\{2t+3,\ldots,4t+4\}c(v_{i}) = c_{2}(v_{i}),
 \forall v \in \V(G'''')  c(v) = c_{2}(v)$.
Moreover, the vertices $v_{t+2},\ldots,v_{3t+3}$ inherit the common certificates. Then all nodes of $G'$ with certificate $c$ have isomorphic $t$-neighborhoods as in $G_{t}(G',v',0,G'',v'',1)$ with certificate $c_{1}$. Moreover, all nodes of $G''''$ with certificate $c$ have isomorphic $t$-neighborhoods  as in $G_{t}(G''',v''',1,G'''',v'''',0)$ with certificate $c_{2}$. 
The nodes $v_{1},\ldots,v_{2t+2}$ of $P(t)$ with certificate $c$ have isomorphic $t$-neighborhoods as in $G_{t}(G',v',0,G'',v'',1)$ with certificate $c_{1}$.  The nodes $v_{2t+3},\ldots, v_{4t + 4}$ with certificate $c$ have isomorphic $t$-neighborhoods as in $G_{t}(G''',v''',1,G'''',v'''',0)$ with certificate $c_{2}$.
Thus, all  nodes must accept the configuration $G_{c} \notin L$. This is a contradiction.% and, thus, the algorithm needs certificates of size at least $\frac{g(n)}{2\cdot 10 \cdot(2t+2)} = \Theta(n^{2})$ and, therefore, the theorem follows.
\end{proofof}

\subsection{A Linear Lower Bound for the Certificate Size of Fixed Point Free Symmetries}\label{sec:linear}

  In order to prove Lemma \ref{satz:insideNLD:FixedPointFreeSymmetryOnTrees}, we will focus on the family $ \setT_{n}$ of  all connected trees with $n$ nodes and without any isomorphic copies.  Therefore, we will need to know the number of trees with $n$ nodes. 
 For $n\geq 2$ holds: there are $n^{n-2}$ trees with $n$ nodes\cite{neville1953codifying}. 
Since a graph with $n$ nodes is at most isomorphic to $n!$ graphs, we have
$ |\setT_{n}| \geq \frac{n^{n-2}}{n!} \text{ and } \log\left(\frac{n^{n-2}}{n!}\right) = \Theta(n).$
%Konstruktion T_{t}(T',v',T'',v'')

The construction of test graphs is as follows: these graphs are built out of two trees from $\setT_{n}$ in the following way: let  $T',T'' \in \setT_{n}$ be two trees and let $v'\in \V(G')$ and $v''\in \V(G'')$ be two nodes of the graphs. Then $T_{n}(T',v',T'',v'')$  consists of the two trees $T', T''$ and an additional path $v',v_{1},\ldots, v_{2t+2},\ldots,v_{\psi(n)},v''$ which we will denote by $P(n)$, whereby  $$ 
  \psi(n) = \begin{cases}
     n    \qquad&,\text{ if }n\text{ is even}\\
     n-1 &,\text{ if } n\text{ is odd}
  \end{cases}
$$
ensures an even size of the path. This is of importance when we will look at the fixed point free symmetries of the graph. 
 The nodes do not have inputs. An example can be found in Figure \ref{fig:lowerbound:tree}. 
We need the following property of the constructed tree $T_{n}(T',v',T'',v'')$ for the lower bound: 
for $T',T'' \in \setT_{n}$we have: $T_{n}(T',v',T'',v'') \in \FixedPointFreeSymmetryOnTrees$ if and only if $T' = T''$.
We can show Lemma \ref{thm:lowerbound:fixedpointfreesymmetryontrees} like before.

\section{Some Open Problems}
%We have given two hierarchies for the distributed complexity theory introduced by Korman, Peleg and Fraigniaud. For the deterministic case, we have shown that  if one additional communication round is allowed, really more languages can be decided. On the other hand, we have given a hierarchy within the class $\NLD$ based on the certificate size. In particular, we have shown the hierarchy
%$$ \NLDf{1}\subsetneq \NLDf{\log n} \subsetneq \NLDf{n} \subsetneq \NLDf{n^2} \subseteq\NLD(n^2 + |w|)= \NLD.$$ 
%There are still open problems.
 One interesting question is whether there are languages which need certificates in between $\log n$ and $n$ or $n$ and $n^2$. Since  $\NLD = \NLDf{n^2 + |w|}$ holds, another question is whether there is a language which needs certificates of size $\Theta(n^2 + |w|)$.  
Furthermore, it is interesting to know, whether there are trade offs between running time and certificate sizes, if the requirement of a constant number of communication rounds is omitted.

%=========================================================================
%  Bibliography
%=========================================================================

% When using Bibtex, the following form may be used.

\bibliographystyle{plain}
\bibliography{literatur}

%=========================================================================
%  Appendix
%=========================================================================

\pagebreak \small

\begin{appendix}
\section{Additional Figures}

\begin{figure}[h!!!]
  \begin{center}
     \includegraphics[width=0.3\textwidth]{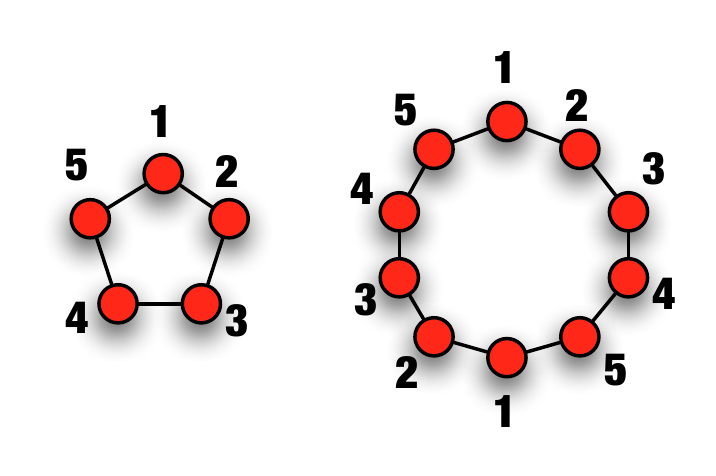}\\
     \includegraphics[width=0.6\textwidth]{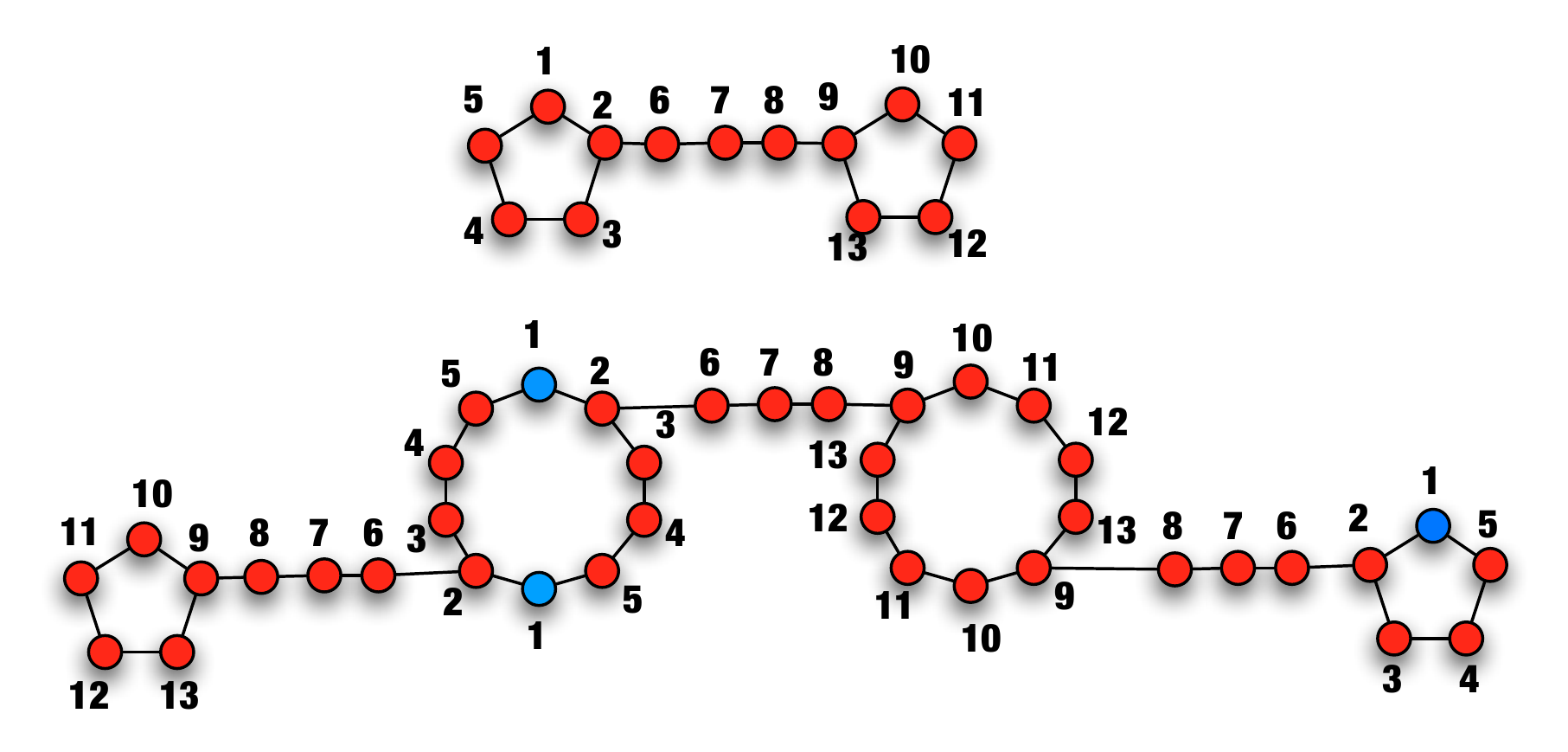}
  \end{center}
   \caption{The figure shows two pairs of a lift and a lifted configuration. The labels are the assigned identifiers. The first example of a lift is a cycle. The lift occurs two times in the lifted configuration.  The second example is more complex. The lift occurs three times in the lifted configuration. We will see that it is no coincidence that the graphs have cycles. }
   \label{fig:lifts:examples}
\end{figure}

\begin{figure}[h!!!!]
\begin{center}
  \includegraphics[width=0.4\textwidth,angle=90]{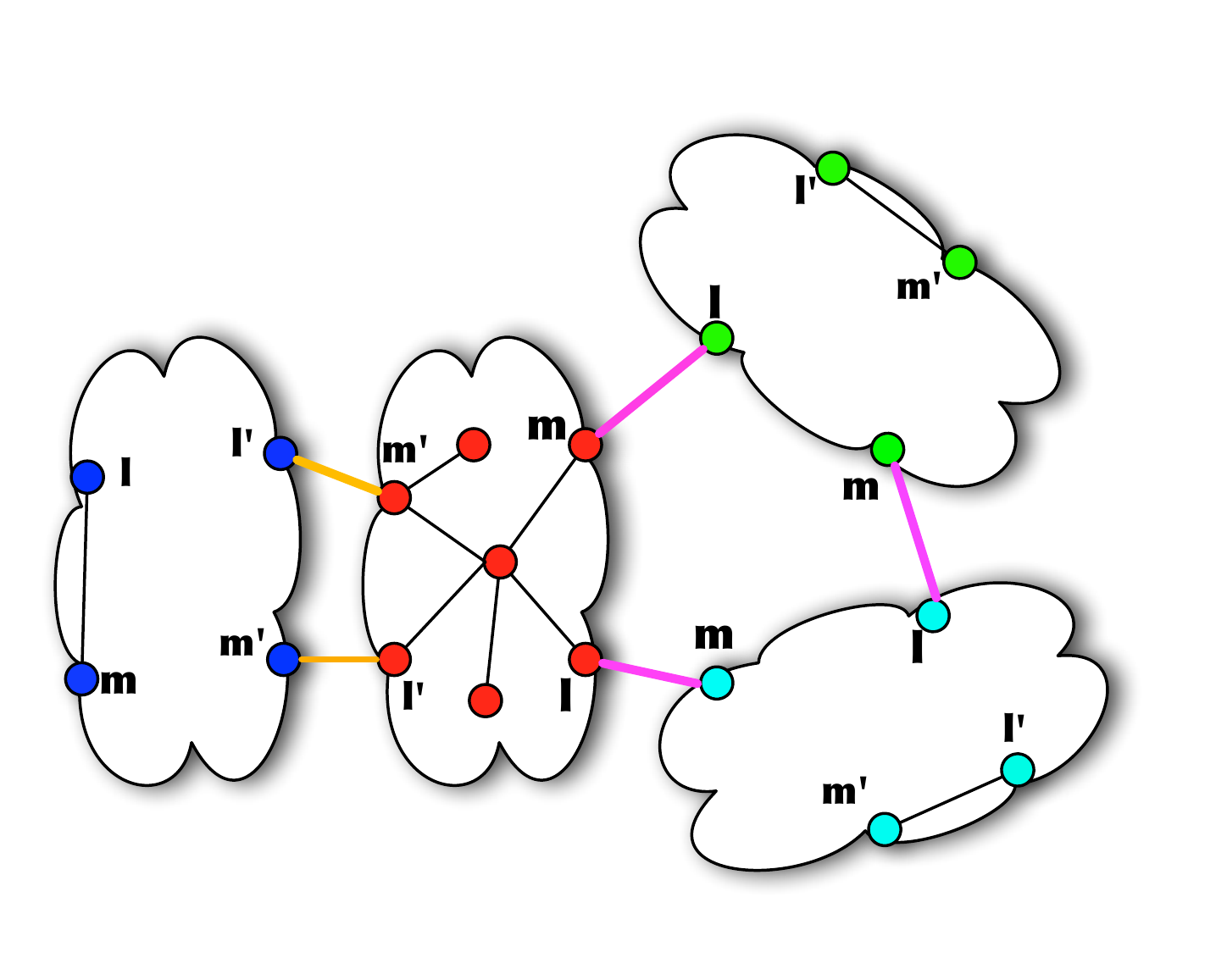}
  \end{center}
  \caption{The figure shows the two possibilities for the defined sequences. The orange edges show the case of a cycle between two partitions. The purple edges show the case of a cycle with several partitions.}
  \label{fig:proof:lift:cycle}
\end{figure}

\begin{figure}
  \begin{center}
      \includegraphics[width=0.45\textwidth]{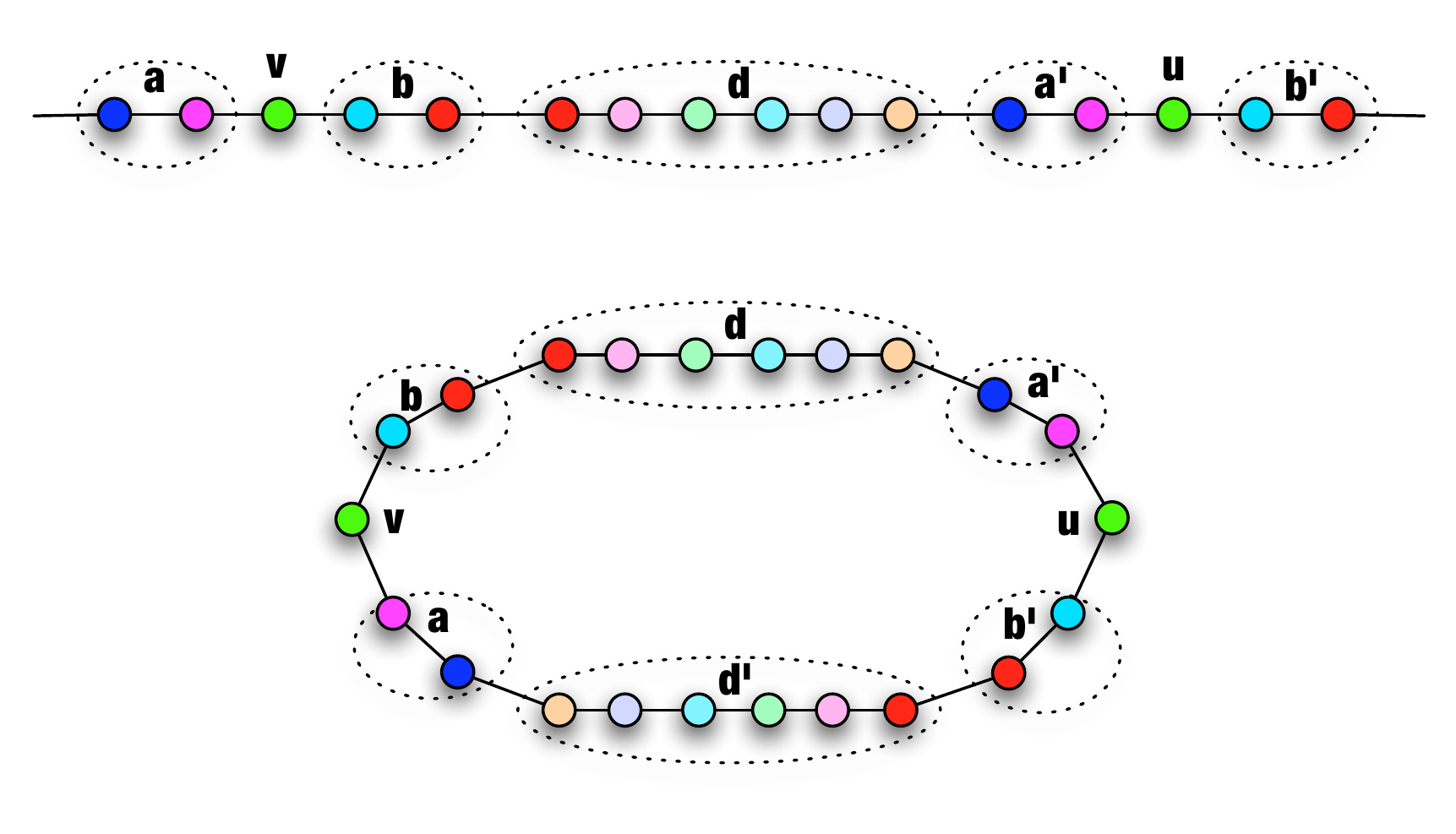}
  \end{center}
   \caption{The figure shows the construction of the cycle from the path. The colors of the nodes denote the different certificates.}
   \label{fig:lowerbound:tree:cycleconstruction}
\end{figure}

\begin{figure}[h!!!tbp]
   \begin{center}
      \includegraphics[width=0.5\textwidth]{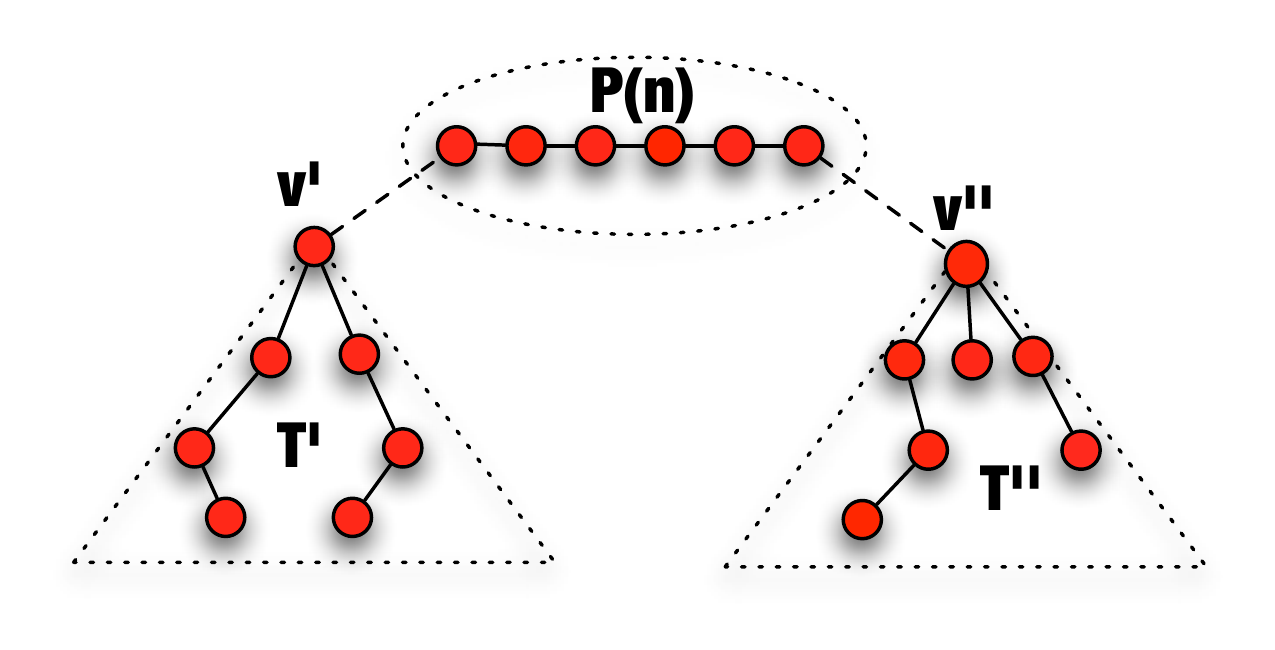}
   \end{center}
  \caption{The figure shows the tree $T_{7}(T',v',T'',v'')$ which is built out of two trees of size $7$. It is noteworthy that $P(n)$ only consists of $6$ nodes as we make sure that the number of nodes of the path is even. Since the two trees are not isomorphic, the tree does not belong to the language $\FixedPointFreeSymmetryOnTrees$.}
  \label{fig:lowerbound:tree}
\end{figure}

\begin{figure}[h!!!]
   \begin{center}
     \includegraphics[width=0.5\textwidth]{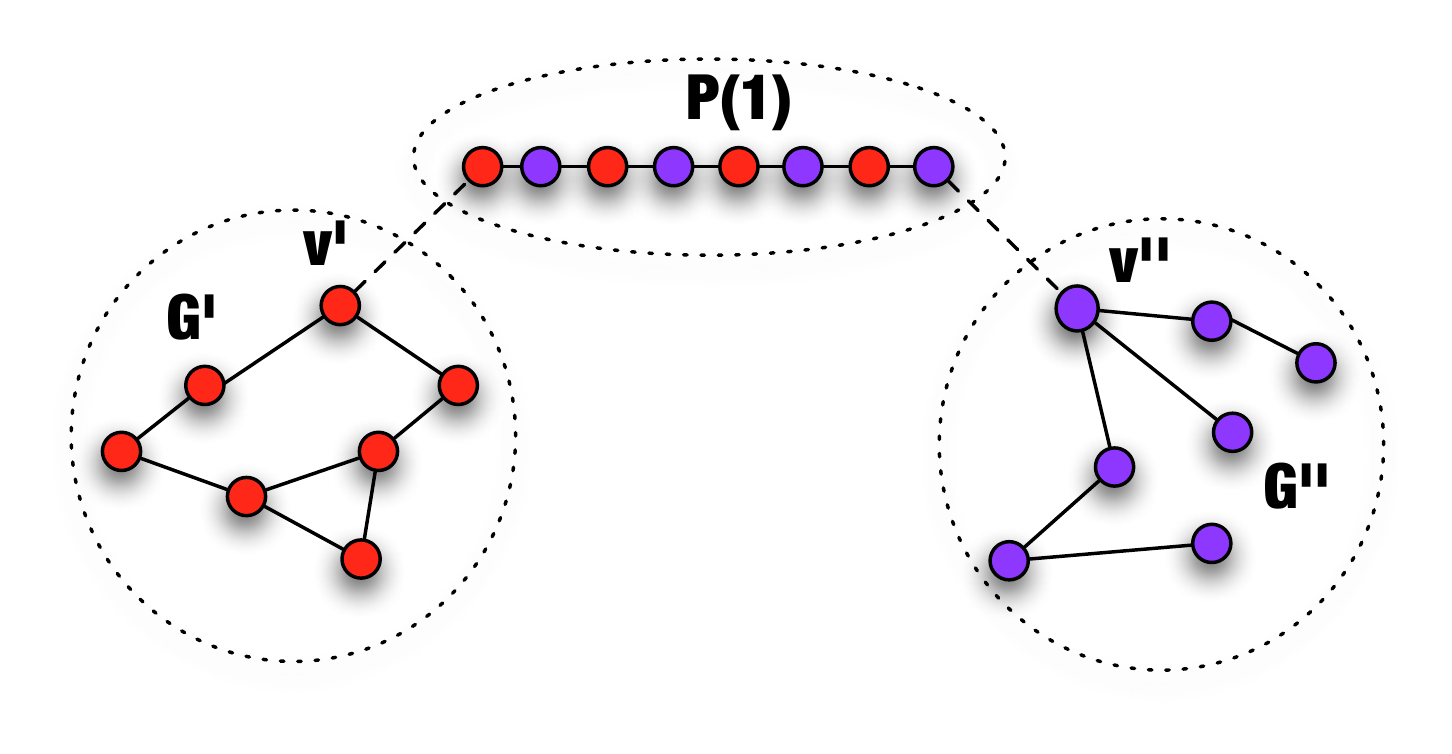}
   \end{center}
   \caption{The figure shows the graph $G_{t}(G',v',0,G'',v'',1)$ for two graphs $G',G''$. The two partitions are visualized by the two different colors. In this example, the partitions are equally sized and, thus, the configuration belongs to the language $\EqualSizePartition$. }
   \label{fig:lowerbound:PartionGraph}
\end{figure}

\newpage
\section{An Overview of Certificate Sizes}\label{sec:appendix:certsize}

The following table gives an overview of the sizes of  certificates that are needed to verify the languages. The proofs are given in \cite{kamilmsc}. Most languages are not defined in this paper, but the languages should be understandable by their names. The $\Omega(1)$ lower bounds are given by proofs which show that a language cannot be decided deterministically.
\vskip 1cm
\begin{center}
 \begin{tabular}{|p{5.7cm}|p{2.1cm}|p{2.1cm}|}
 \hline
         &     \multicolumn{2}{c|}{ \textbf{certificate size bounds}}         \\ 
  \textbf{Name} &  \textbf{min} & \textbf{max}  \\
  \hline \hline
   $\Avg$ & $\Omega(1)$   & $\bO(n^{2} +|w| )$       \\
   $\AvgDeg$ &    & $\bO(n^{2} +|w| )$      \\
   $\Bipartite$ &  $\Omega(1)$  & $\bO(1 )$     \\
  $\Clique$ & $\Omega(\log n) $    & $\bO(\log n)$      \\
  $\kColorable$ &  $\Omega(1)$   & $\bO(1 )$       \\
  $\kConsensus$ & $\Omega(\log n)$     & $\bO(\log n)$     \\
  $\DomaticNumber$ &  $\Omega(1)$  & $\bO(|w| )$     \\
  $\kDominatingSetNotFixed$ &   $\Omega(1)$ & $\bO(\log n )$   \\
  $\kEdgeColorable$ &  $\Omega(1)$  & $\bO(\log n )$      \\
  $ \EqualSizePartition$        & $\Omega(n^{2})  $  & $\bO(n^{2}) $       \\
   $\FixedPointFreeSymmetryOnTrees$ & $\Omega(n)$ & $\bO(n)$       \\
   $\HasPerfectMatching$ &  $\Omega(1)$  & $\bO(\log n )$      \\
   $\LogicalAnd$ & $\Omega(1)$   & $\bO(\log n)$    \\
   $\LogicalOr$ & $\Omega(1)$   & $\bO(\log n )$     \\
   $\Max$  & $\Omega(\log n)$     & $\bO(\log n)$     \\
   $\coMaximumMatching$ &  $\Omega(1)$  & $\bO(\log n )$      \\
   $\Min$ & $\Omega(\log n) $    & $\bO(\log n)$   \\
   $\Mode$ &   $\Omega(1)$ & $\bO(n^{2} +|w| )$       \\
   $\MostDeg$ &    & $\bO(n^{2} +|w| )$     \\
   $\NonEmptySetIntersection$ &  $\Omega(1)$  & $\bO(|w| )$     \\
   $\kSmallest$ & $\Omega(\log n)$     & $\bO(\log n)$   \\
  $\Tree$   & $\Omega(\log n)$     & $\bO(\log n)$        \\  
  %        &     &     &          \\  
  \hline
 \end{tabular}
\end{center}

\end{appendix}

\end{document}